\renewcommand\footnotetextcopyrightpermission[1]{}
\theoremstyle{definition}
\newcommand{\Nin}{\mathcal{N}^{\mathrm{in}}}
\newcommand{\Nout}{\mathcal{N}^{\mathrm{out}}}
\newcommand{\MIP}{P^\ast}
\newcommand{\MIOA}{MIOA_\theta}
\newcommand{\MIIA}{MIIA_\theta}
\newcommand{\TMP}{\beta_{\mathrm{tmp}}}
\newcommand{\TMPN}{\beta_{\mathrm{tmp}}^-}
\newcommand{\etal}{\textit{et al.,}}
\newcommand{\thead}[1]{\multicolumn{1}{c}{#1}}
\DeclareMathOperator*{\argmin}{arg\,min}
\DeclareMathOperator*{\argmax}{arg\,max}
  \providecommand\BibTeX{{%
    \normalfont B\kern-0.5em{\scshape i\kern-0.25em b}\kern-0.8em\TeX}}}
\begin{document}

\title{Network Prebunking Problem: Optimizing Prebunking Targets to Suppress the Spread of Misinformation in Social Networks}
\author{Satoshi Furutani}
\email{satoshi.furutani@ntt.com}
\orcid{1234-5678-9012}
\affiliation{%
  \institution{NTT Social Informatics Laboratories}
  \streetaddress{3-9-11, Midori-cho, Musashino-shi}
  \city{Tokyo}
  \country{Japan}
  \postcode{180-8585}
}

\author{Toshiki Shibahara}
\affiliation{%
  \institution{NTT Social Informatics Laboratories}
  \streetaddress{3-9-11, Midori-cho, Musashino-shi}
  \city{Tokyo}
  \country{Japan}
  \email{toshiki.shibahara@ntt.com}
}

\author{Mitsuaki Akiyama}
\affiliation{%
  \institution{NTT Social Informatics Laboratories}
  \streetaddress{3-9-11, Midori-cho, Musashino-shi}
  \city{Tokyo}
  \country{Japan}
  \email{mitsuaki.akiyama@ntt.com}
}

\author{Masaki Aida}
\affiliation{%
  \institution{Tokyo Metropolitan University}
  \streetaddress{6-6, Asahigaoka, Hino-shi}
  \city{Tokyo}
  \country{Japan}
  \email{aida@tmu.ac.jp}
}

\renewcommand{\shortauthors}{Furutani \etal}

\begin{abstract}
As a countermeasure against misinformation that undermines the healthy use of social media, a preventive intervention known as \textit{prebunking} has recently attracted attention in the field of psychology.
Prebunking aims to strengthen individuals' cognitive resistance to misinformation by presenting weakened doses of misinformation or by teaching common manipulation techniques before they encounter actual misinformation.
Despite the growing body of evidence supporting its effectiveness in reducing susceptibility to misinformation at the individual level, an important open question remains:
how best to identify the optimal targets for prebunking interventions to mitigate the spread of misinformation in a social network.
To address this issue, we formulate a combinatorial optimization problem, called the \textit{network prebunking problem}, which aims to select optimal prebunking targets that minimizes the spread of misinformation in a social network under limited intervention budgets.
We show that the problem is NP-hard and that its objective function is monotone and submodular, which provides a theoretical foundation for approximation guarantees of greedy algorithms. 
However, since the greedy algorithm is computationally expensive and does not scale to large networks, we propose an efficient approximation algorithm, MIA-NPP, based on the Maximum Influence Arborescence (MIA) approach, which restricts influence propagation around each node to a local directed tree rooted at that node. 
Through numerical experiments using real-world social network datasets, we demonstrate that MIA-NPP effectively suppresses the spread of misinformation under both fully observed and uncertain model parameter settings.
\end{abstract}



\keywords{Influence maximization, misinformation, prebunking, social network}



\maketitle
\pagestyle{plain}

\section{Introduction}

Social media has emerged as a major channel for information dissemination, yet it faces the serious threat of misinformation.
The spread of misinformation distorts elections by manipulating individual opinions and fueling political polarization, thereby undermining the integrity of democratic systems.
Moreover, during the COVID-19 pandemic, it triggered significant public health and social disruptions.

One widely adopted strategy to combat the threat of misinformation is \textit{debunking}, which involves retrospectively refuting false claims that have already spread by providing fact-checks and corrective information~\cite{smith2011correcting,lewandowsky2012misinformation}.
Numerous studies have demonstrated that debunking can be effective in reducing belief in misinformation and curbing its further spread~\cite{paynter2019evaluation, yousuf2021media, walter2018unring, chan2017debunking, martel2024fact}.
Traditionally, debunking is mainly conducted by media outlets and expert organizations. 
However, in recent years, social media platforms themselves have started to incorporate debunking features, as exemplified by Community Notes on X/Twitter~\cite{saeed2022crowdsourced, wojcik2022birdwatch}.
Despite their effectiveness and growing adoption, such post-hoc countermeasures face the challenge of removing the influence of misinformation from individuals once they have been exposed to it.
In fact, misinformation often continues to affect later reasoning even after it has been officially retracted or corrected, known as the continued influence effect~\cite{lewandowsky2012misinformation}.
Furthermore, a previous study~\cite{zollo2017debunking} reported that corrective information is primarily consumed by individuals who were already skeptical of the misinformation, and often fails to reach those who actually believe it, indicating that debunking also faces challenges in terms of reach.

On the basis of this background, a preventive intervention known as \textit{prebunking} has recently attracted attention in the field of psychology. 
Prebunking is an intervention rooted in (psychological) inoculation theory~\cite{mcguire1961relative}, and aims to strengthen individuals' cognitive resistance to misinformation by presenting weakened doses of misinformation (e.g., expected false claims with preemptive refutations) or by teaching common manipulation techniques (e.g., fake experts, logical fallacies, conspiracy theories, etc.) before they encounter actual misinformation~\cite{van2017inoculating,lewandowsky2021countering}.
Prebunking has been implemented in various forms, including text messages, infographics, video ads, and game-based interventions, and numerous experimental studies have demonstrated its effectiveness in reducing individuals' misinformation susceptibility~\cite{roozenbeek2019fake,roozenbeek2020breaking,maertens2021long,tay2022comparison,roozenbeek2022technique}.
In particular, it has been reported that active inoculation, such as game-based interventions that let players learn manipulation techniques through gameplay, is more effective than passive inoculation, which simply presents prebunking messages via text, infographics, or videos~\cite{basol2021towards,roozenbeek2020prebunking}.

However, while there are numerous studies focusing on the effectiveness of prebunking at the individual level and its effective design, an important open question remains: 
\textit{How can we identify the optimal targets for prebunking interventions to mitigate the spread of misinformation in a social network?}
In real-world social media platforms, it is often unrealistic to provide interventions to all users due to budget constraints.
Indeed, active inoculation through games requires motivating users to participate in the intervention and ensuring they understand the rules~\cite{kozyreva2024toolbox}.
While passive inoculation can be deployed at scale by broadcasting messages to all users, it still incurs costs. 
This is because achieving effective prebunking requires personalized interventions~\cite{barman2024personalised,barman2025rethinking} and follow-up efforts such as monitoring, iteration, and measuring outcomes~\cite{harjani2022practical}.
Therefore, to maximize its overall impact under limited intervention resources, strategic target selection based on network structure and diffusion dynamics is essential.

Based on this motivation, we formulate a combinatorial optimization problem, called the \textit{network prebunking problem}, which aims to select an optimal set of prebunking targets that minimize the spread of misinformation on a social network under limited intervention budgets.
To capture the realistic scenario in which users exposed to misinformation share either misinformation or corrective information depending on their susceptibility, we model the competitive diffusion of misinformation and corrective information based on the IC-N model~\cite{chen2011influence} and incorporate prebunking as an operation that reduces a user's susceptibility.
We prove that the network prebunking problem is NP-hard, and that its objective function is monotone and submodular.
This property implies that the greedy algorithm theoretically provides an approximation guarantee. 
However, since the greedy algorithm is computationally expensive and does not scale to large networks, we further propose an efficient approximation algorithm, MIA-NPP, based on the Maximum Influence Arborescence (MIA) approach~\cite{chen2010scalable_KDD}, which improves scalability by restricting influence propagation around each node to a local directed tree rooted at that node.
Through numerical experiments using real-world social network datasets, we demonstrate that MIA-NPP effectively suppresses the spread of misinformation compared to simple heuristics and approximation algorithms for related optimization problems, under both fully observed and uncertain model parameters.

Our key contributions are summarized as follows:
\begin{itemize}
    \item We are the first to formulate a combinatorial optimization problem for minimizing misinformation spread in a social network through prebunking, an approach distinct from conventional interventions such as \textit{blocking} and \textit{clarification} (detailed in Section~\ref{sect:related_work}).
    \item We prove that the network prebunking problem is NP-hard and that its objective function is monotone and submodular, providing a theoretical foundation for approximation guarantees of greedy algorithms.
    \item To address the scalability issue of the greedy approach, we propose an efficient approximation algorithm, MIA-NPP, based on the MIA framework.
    \item We conduct experiments on real-world social network datasets and show that MIA-NPP consistently outperforms baseline methods, even under parameter uncertainty.
\end{itemize}

\section{Related Work}
\subsection{Design of Effective Prebunking}

According to a practical guide on prebunking published by Google Jigsaw~\cite{harjani2022practical}, the effective design of prebunking interventions should incorporate three key elements: explicit presentation of manipulation techniques, concise and engaging formats, and appropriate timing and contextual relevance.

First, explicitly highlighting common manipulation techniques frequently used in misinformation, such as impersonation, emotional manipulation, polarization, and conspiratorial ideation, has been shown to be effective in boosting individuals' resistance to misinformation~\cite{roozenbeek2019the,roozenbeek2019fake}. 
Such technique-based inoculation does not rely on specific factual claims and can therefore be applied across diverse topics and contexts~\cite{traberg2022psychological,cook2017neutralizing,basol2021towards,roozenbeek2022technique}.
Second, short and visually engaging formats have been shown to be effective for prebunking interventions~\cite{roozenbeek2022psychological,basol2020good}. 
In particular, 30-second animated videos~\cite{roozenbeek2022psychological} or interactive and gamified interventions such as the Bad News game~\cite{roozenbeek2019the,roozenbeek2019fake} have been shown to facilitate better understanding and retention of manipulation techniques.
Third, the timing and contextual relevance of prebunking messages are also critical to their success~\cite{compton2021inoculation,roozenbeek2022psychological}. 
Prebunking is most effective when presented just prior to exposure to misinformation, or in contexts where such exposure is likely. 
In contrast, when there is a significant delay between intervention and exposure, its effectiveness may decay over time~\cite{capewell2024misinformation}. 
A field experiment on YouTube further demonstrates that inserting short prebunking videos directly before videos containing or related to misinformation significantly improves users' recognition of manipulation techniques and reduced their intent to share false content~\cite{roozenbeek2022psychological}.

Furthermore, in recent years, Bayiz and Topcu formulate the design of optimal prebunking strategies as a policy synthesis problem, aiming to optimize the timing and frequency of prebunking message delivery while minimizing disruption to the user experience~\cite{bayiz2023prebunking}.
However, while their work focuses on optimizing the frequency and timing of prebunking for a single user, with the goal of minimizing user experience disruption, our work aims to suppress the spread of misinformation across the entire social network by selecting optimal prebunking targets.

\subsection{Minimizaing Misinformation Spread on Social Networks}
\label{sect:related_work}

\begin{figure*}[t!]
    \centering
    \includegraphics[width=1\linewidth]{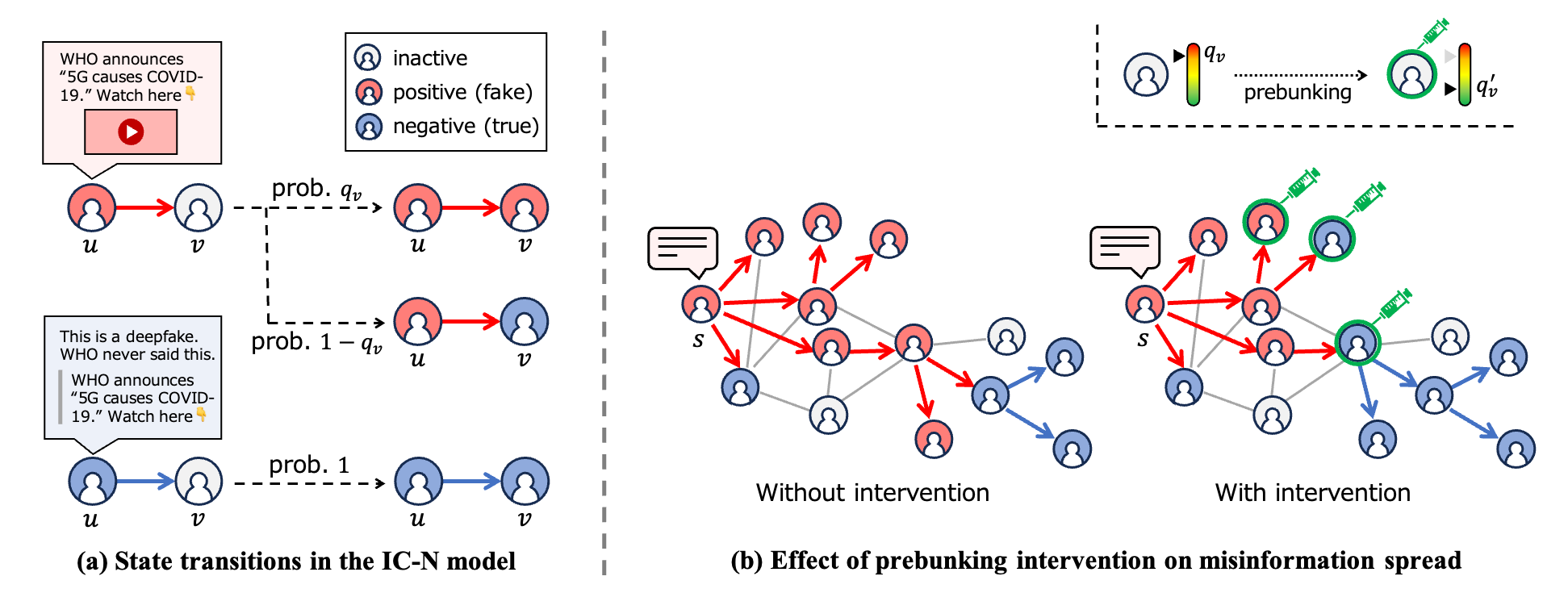}
    \caption{Schematic diagram of the network prebunking problem.}
    \label{fig:network_prebunking_problem}
\end{figure*}

In the existing literature, there are two approaches to suppressing the spread of misinformation on social networks: blocking and clarification~\cite{zareie2021minimizing,chen2022influence}.
The problem of suppressing the spread of misinformation using the blocking approach is often referred to as the Influence Minimization (IMIN) problem, while the problem of suppressing the spread of misinformation using the clarification approach is referred to as the Influence Blocking Maximization (IBM) problem.

The IMIN problem aims to suppress the spread of misinformation by blocking some nodes or edges in a network (i.e., excluding them from the network).
Formally, the IMIN problem by node blocking is a combinatorial optimization problem that finds a blocking node set $B \subset V \setminus S$ with size $|B|=k$ that minimizes the spread of misinformation when misinformation spreads from a given seed node set $S \subset V$ in a graph $G$ under some information diffusion model.
This problem is NP-hard under the Independent Cascade (IC) model~\cite{xie2023minimizing}, and several approximate solutions have been proposed, including greedy-based approaches~\cite{wang2013negative,fan2013least,yan2019minimizing,xie2023minimizing} and node centrality heuristics~\cite{yao2015topic}.
Additionally, the IMIN problem by edge blocking has also been studied and shown to be NP-hard~\cite{kimura2008solving,kimura2009blocking,yao2014minimizing,khalil2013cuttingedge}.
However, although the blocking approach is powerful, it can negatively affect the user experience, potentially leading to user complaints and deplatforming~\cite{wang2017drimux}.
In addition, ethical concerns such as freedom of expression and platform censorship have also been pointed out~\cite{hosni2019darim}.

The clarification approach is a more moderate method that attempts to neutralize and mitigate the influence of misinformation by appropriately injecting corrective information that counteracts misinformation into the network.
Formally, the IBM problem is a combinatorial optimization problem that aims to find a set of seed nodes $S_T$ with size $|S_T| = k$ for disseminating corrective information so as to maximize the obstruction of the spread of misinformation when misinformation spreads from a set of seed nodes $S_M$ in a graph $G$.
He et al.~\cite{he2012influence} first studied this problem under the competitive linear threshold model and proposed an efficient algorithm based on the MIA approach~\cite{chen2010scalable_KDD,chen2010scalable_ICDM}.
Budak et al.~\cite{budak2011limiting} introduced the Campaign-Oblivious Independent Cascade Model (COICM) as a competitive diffusion model of misinformation and corrective information, and proved that the IBM problem under the COICM is NP-hard.
They also demonstrated that a simple degree centrality heuristic achieves good empirical performance.
The IBM problem under the COICM has been frequently adopted in subsequent studies, and various approximation algorithms have been proposed, including  MIA approach-based~\cite{wu2017scalable}, centrality-based~\cite{arazkhani2019influence}, and community-based~\cite{lv2019community}.
Furthermore, recently, there have been several attempts to solve the IBM problem using machine learning-based methods~\cite{tong2020stratlearner,chen2023neural}.
However, as pointed out in the literature~\cite{budak2011limiting,wen2014shut}, the clarification approach is less effective if there is a significant delay between the onset of misinformation diffusion and the injection of corrective information.

This study proposes prebunking interventions as a novel approach for minimizing the spread of misinformation in social networks, which differs from blocking and clarification in its focus on reducing users' susceptibility prior to exposure, rather than removing nodes or injecting corrective information into the network.

\section{Problem Formulation}
\subsection{Information Diffusion Model}
\label{sect:diffusion_model}

In this study, we describe the competitive diffusion dynamics of misinformation and corrective information on a social network based on the IC-N model~\cite{chen2011influence}.
The IC-N model is a model extends the classic IC model by incorporating the probabilistic occurrence of negative opinions.

Let $G=(V, E)$ be a directed graph, where $V$ is a set of $n$ nodes, and $E \subseteq V \times V$ is a set of directed edges.
In the IC-N model, each node $v \in V$ takes one of three states: \textit{inactive}, \textit{positive (positively active)}, or \textit{negative (negatively active)}.
Each node $v$ is associated a parameter $q_v \in [0,1]$.
Note that, in the original study~\cite{chen2011influence}, $q_v$ was assumed to be the same for all nodes $v$ (i.e., $q_v=q$), whereas in our study, we allow $q_v$ to vary across nodes.
The state of node $v$ is determined by the state of its activating neighbor and its own parameter $q_v$.

For a given seed node set $S \subseteq V$, at step $t=0$, the nodes in $S$ are positively activated, and all other nodes are inactive.
At each discrete time step $t \ge 1$, if a node $u$ becomes positive (or negative) at that step, it has a single chance to activate its inactive neighbors $v \in \Nout_u$ at step $t+1$ with probability $p_{uv}^+$ (or $p_{uv}^-$), where $p_{uv}^+$ (resp. $p_{uv}^-$) denotes the probability that positive (resp. negative) information is propagated from $u$ to $v$ via the directed edge $(u, v) \in E$.
In this study, we assume that the propagation probability is independent of the type of information, i.e., $p_{uv}^+ = p_{uv}^- = p_{uv}$.
When an inactive node $v$ is activated by a positive neighbor, it becomes positive with probability $q_v$ and negative with probability $1-q_v$.
On the other hand, when it is activated by a negative neighbor, it becomes negative with probability $1$.
If an inactive node is activated simultaneously by positive and negative neighbors in the same step, the activation by the negative neighbor takes precedence.
Once a node $v$ becomes positive or negative, its state does not change thereafter.
The diffusion process terminates when no further activations are possible in the graph $G$.
The state transition process of the IC-N model is schematically illustrated in Fig.~\ref{fig:network_prebunking_problem}a.

In our problem setting, the inactive state corresponds to a state in which no information has been shared yet, the positive state corresponds to a state in which users have been deceived by misinformation and are sharing it, and the negative state corresponds to a state in which users have recognized the fallacy of the misinformation and are sharing correct information.
The parameter $q_v$ can be interpreted as the misinformation susceptibility of each user $v$.
That is, when a user $v$ receives misinformation (e.g., \textit{[Breaking] WHO Director-General announces ``5G causes COVID-19.'' Watch the video here: (Link to a deepfake video)}), the higher the susceptibility, the higher the likelihood of being deceived by the misinformation, and conversely, the lower the susceptibility, the higher the likelihood of recognizing the fallacy of the misinformation and adopting correct information.
On the other hand, if user $v$ receives corrective information (e.g., \textit{This is a deepfake. The WHO Director-General never said this.}) before receiving the misinformation, the user must believe the corrective information and will not adopt the original misinformation.
In reality, when an individual's ideological motives strongly influence their belief in information, they may reject corrective information even if they receive it and believe misinformation that aligns with their ideology.
However, such cases are beyond the scope of the model considered in this study.

\subsection{Network Prebunking Problem}
\label{sect:NPP}
In the IC-N model, we define prebunking for a node $v$ as an operation that reduces its misinformation susceptibility prior to the diffusion process, specifically updating $q_v$ to $q_v' = (1 - \varepsilon_v) q_v$, where $\varepsilon_v \in [0, 1]$ represents the individual intervention effect of prebunking on node $v$.
Let $X \subseteq V$ be the set of nodes selected for prebunking.
For given $S$ and $X$ in a graph $G$, let $ap^+(v;S,X,G)$ and $ap^-(v;S,X,G)$ be the probability that node $v$ eventually shares misinformation and corrective information, respectively.
We define the expected misinformation spread as $\sigma_G^+(S, X) = \sum_{v \in V} ap^+(v; S, X, G)$ which represents the expected number of nodes that share misinformation.
Similarly, the expected corrective information spread is defined as $\sigma_G^-(S, X) = \sum_{v \in V} ap^-(v; S, X, G)$.
We now consider a combinatorial optimization problem to find the optimal prebunking target set $X^\ast$ that minimizes the expected misinformation spread $\sigma_G^+(S, X)$, subject to an intervention cost constraint $|X| = k ~(\le n)$.
That is,
\begin{align}
    X^\ast = \argmin_{X \subseteq V} ~\sigma_G^+(S, X), \quad \text{s.t.}\quad |X| = k.
    \label{eq:NPP}
\end{align}
We call this problem the network prebunking problem.
Note that, the total expected spread $\sigma_G(S) = \sigma_G^+(S, X) + \sigma_G^-(S, X)$ is independent of the prebunking set $X$ for a given $S$.
Therefore, this problem can also be formulated as follows:
\begin{align}
    X^\ast = \argmax_{X \subseteq V} ~\sigma_G^-(S, X), \quad \text{s.t.}\quad |X| = k.
    \label{eq:NPP2}
\end{align}
Although we here assume a fixed seed set $S$ in Eqs.~(\ref{eq:NPP})-(\ref{eq:NPP2}), the formulation can be naturally extended to the case where $S$ is drawn from a probability distribution. 
The details are given in Appendix~\ref{sec:NPP-US}.

Figure~\ref{fig:network_prebunking_problem} shows a schematic diagram of the network prebunking problem.
Without prebunking intervention, misinformation disseminated from seed node $s$ spreads across the network via edges, causing many nodes to share the misinformation.
On the other hand, with prebunking intervention on some nodes, the intervened nodes are less likely to be deceived by the misinformation and more likely to share corrective information.
As a result, the spread of misinformation across the network can be reduced compared to the case without intervention.
The network prebunking problem aims to optimally select intervention targets and minimize the final spread of misinformation when the number of nodes that can be intervened (intervention budget) is given as $k$.

The network prebunking problem is closely related to both the IMIN and IBM problems.
Specifically, when $q_v = \varepsilon_v = 1$ for all nodes $v \in V$, and the propagation probabilities for each edge $(u,v) \in E$ are set as $p_{uv}^+ = p_{uv}$ and $p_{uv}^- = 0$, the network prebunking problem becomes equivalent to the IMIN problem by node blocking.
Similarly, under the same assumption $q_v = \varepsilon_v = 1$, the problem corresponds to the IBM problem with node-specific delay parameters.
For details, see Appendix~\ref{sect:other_problems}.

\subsection{Hardness of Network Prebunking Problem}
Here, we prove that the network prebunking problem is NP-hard.
To do so, we show that the set cover problem, which is known to be NP-complete, can be reduced to a special case of the network prebunking problem.
The set cover problem is defined as follows:
Given a universe set $U=\{u_1, \dots, u_n\}$, and a family of subsets $\mathcal{T} =\{T_1, \dots, T_m\}$ such that $U=\bigcup_{T_j \in \mathcal{T}}T_j$, determine whether $U$ can be covered by $k\,(<m)$ subsets.

\begin{theorem}
    The network prebunking problem is NP-hard.
\end{theorem}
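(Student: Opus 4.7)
The plan is to implement the reduction from set cover sketched by the authors just above the statement. Given a set cover instance $(U=\{u_1,\ldots,u_n\},\,\mathcal{T}=\{T_1,\ldots,T_m\},\,k)$, I would construct a three-layer directed graph $G=(V,E)$: a single source $s$, a middle layer of ``subset'' nodes $t_1,\ldots,t_m$ with edges $(s,t_j)$, and a bottom layer of ``element'' nodes $w_1,\ldots,w_n$ with an edge $(t_j,w_i)$ whenever $u_i \in T_j$. Every propagation probability is set to $p_{uv}=1$, every susceptibility to $q_v=1$, and every individual intervention effect to $\varepsilon_v=1$, so that prebunking a node deterministically flips its activation outcome from positive to negative. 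I take $S=\{s\}$ and keep the same budget $k$.

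Under this choice the IC-N process terminates in exactly two deterministic steps: at step $1$ each $t_j$ becomes negative if prebunked and positive otherwise, and at step $2$ each $w_i$ is activated simultaneously by every $t_j$ with $u_i \in T_j$, so by the negative-precedence rule $w_i$ is negative iff at least one such $t_j$ is prebunked (or $w_i$ itself is prebunked). Writing $X_t = X\cap\{t_1,\ldots,t_m\}$, $X_w = X\cap\{w_1,\ldots,w_n\}$, and $U(X_t) = \bigcup_{t_j \in X_t} T_j$, this yields the closed form
\begin{equation*}
\sigma_G^+(S,X) \;=\; 1 + (m-|X_t|) + \bigl|\{w_i : u_i \notin U(X_t)\}\setminus X_w\bigr|.
\end{equation*}
A short exchange argument then shows that an optimum can be taken with $X_w=\emptyset$: if some $w_i \in X$ and $t_{j'} \notin X$, replacing $w_i$ by $t_{j'}$ changes the first term by $-1$ and the last term by at most $+1$, so $\sigma_G^+$ does not increase. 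Iterating this swap reduces the problem to the maximum coverage problem on $(U,\mathcal{T})$ with budget $k$.

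It follows that $U$ admits a cover of size $k$ if and only if $\min_{|X|=k}\sigma_G^+(S,X) = m-k+1$, completing a polynomial-time reduction from set cover and thereby establishing NP-hardness. The main obstacle in the argument is the exchange step, since in the general IC-N model altering one node's state can have non-trivial cascading effects; here this is handled by exploiting the layered construction together with the deterministic all-ones parameters, which collapse the diffusion to a two-step, two-case analysis and make monotonicity under the swap transparent.
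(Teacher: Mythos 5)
Your reduction is correct and is essentially the paper's own proof: the same three-layer gadget (source, subset nodes, element nodes) with all of $p_e$, $q_v$, $\varepsilon_v$ set to $1$, reducing set cover to the prebunking problem with budget $k$. The only cosmetic differences are that you minimize $\sigma_G^+$ with an explicit closed form and exchange argument, whereas the paper maximizes $\sigma_G^-$ and tests the threshold $\sigma_H^-(S,X)\ge k+n$ after informally noting that leaf interventions are never optimal; your version is slightly more detailed but not a different argument.
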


\begin{proof}
For an instance $(U, \mathcal{T}) = (\{u_i\}_{i=1}^n, \{T_j\}_{j=1}^m)$ of the set cover problem, construct a directed graph $H = (V_H, E_H)$ consisting of $n + m + 1$ nodes using the following procedure.
The node set $V_H$ consists of nodes $c_i$ corresponding to each element $u_i$ of the universe set $U$, nodes $b_j$ corresponding to each element set $T_j$ of the family of subsets $\mathcal{T}$, and a dummy node $a$.
For each $j =1, 2, \dots, m$, we connect dummy node $a$ to $b_j$ with a directed edge $(a, b_j)$, and connect node $b_j$ to node $c_i$ with a directed edge $(b_j, c_i)$ if $u_i \in T_j$.
Furthermore, for any edge $e \in E_H$ in the graph $H$, set $p_e = 1$, and for any node $v \in V_H$, set $q_v = 1$ and $\varepsilon_v = 1$.
Figure~\ref{fig:SetCover_instance} shows the graph $H$ corresponding to a set cover instance $(U, \mathcal{T})$ with $U=\{u_1, \dots, u_5\}$ and $\mathcal{T}=\{\{u_1, u_2, u_3\}, \{u_2, u_4\}, \{u_3, u_4\}, \{u_4, u_5\}\}$.

Consider the network prebunking problem defined in Eq.~(\ref{eq:NPP2}) on graph $H$ with seed set $S = \{a\}$.
When $X = \emptyset$, all nodes deterministically become positive after the diffusion process.
In contrast, if we intervene on a node $v \in V_H \setminus S$, then $v$ and all of its downstream nodes $w \in \Nout_v$ deterministically become negative.
Note that, since intervening on a leaf node $c_i$ only affects that node itself; hence to maximize the corrective information spread $\sigma_H^-(S, X)$, optimal intervention targets must be selected from nodes $\{b_1, b_2, \dots, b_m\}$.
From the above, solving the set cover problem is equivalent to determining whether there exists a set $X$ such that $\sigma_H^-(S, X) \ge k + n$.
Therefore, the network prebunking problem is NP-hard.
\end{proof}

\begin{figure}[t]
    \centering
    \includegraphics[width=0.6\linewidth]{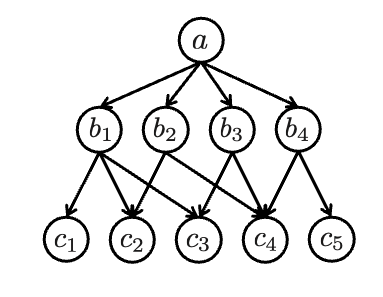}
    \caption{Graph $H$ corresponding to a set cover instance with $U=\{u_1, \dots, u_5\}$ and $\mathcal{T}=\{\{u_1, u_2, u_3\}, \{u_2, u_4\}, \{u_3, u_4\}, \{u_4, u_5\}\}$.}
    \label{fig:SetCover_instance}
\end{figure}

\subsection{Submodularity of Corrective Information Spread}

To develop an approximation algorithm for the network prebunking problem, 
we first need to show that its objective function $\sigma_G^-(S, X)$ is submodular with respect to $X$. 
This property provides a theoretical basis for the design of approximation heuristics discussed in the next section.

In what follows, we formally prove the submodularity of $\sigma_G^-(S, X)$, starting with the following lemma.

\begin{lemma}
    \label{lem:lemE1}
    For any given submodular function $f(X)$ and any fixed set $W \subseteq V$, $f(X \cap W)$ is also submodular.
\end{lemma}
\begin{proof}
Let $g(X) = f(X \cap W)$, and define $\Delta_g(x |A) := g(A \cup \{x\}) - g(A)$.
For any pair of sets $A \subseteq B \subseteq V$ and any element $x \in V$, we show that $\Delta_g(x | A) \ge \Delta_g(x | B)$ holds.
By definition, we have
\begin{align*}
    \Delta_g(x|A) &= f((A \cup \{x\}) \cap W) - f(A \cap W) \\
    &= f((A \cap W) \cup (\{x\} \cap W)) - f(A \cap W).
\end{align*}
For $x \notin W$, since $\Delta_g(x|A) = \Delta_g(x|B) = 0$, it follows trivially that $\Delta_g(x | A) \ge \Delta_g(x|B)$.
For $x \in W$, we have $\Delta_g(x|A) = f((A \cap W) \cup \{x\}) - f(A \cap W) = \Delta_f(x|A \cap W)$.
Similarly, $\Delta_g(x|B) = \Delta_f(x|B \cap W)$.
Since $A \cap W \subseteq B \cap W$ and $f(\cdot)$ is submodular, it follows that $\Delta_g(x | A) = \Delta_f(x|A \cap W) \ge \Delta_f(x|B \cap W) = \Delta_g(x|B)$.
Therefore, $g(X) = f(X \cap W)$ is submodular.
\end{proof}

\begin{theorem}
    The corrective information spread $\sigma_G^-(S, X)$ is monotone and submodular with respect to $X$.
\end{theorem}
\begin{proof}
Following the live-edge model~\cite{kempe2003maximizing}, for each edge $(u,v) \in E$ in graph $G$, we independently determine whether the edge is live with probability $p_{uv}$ and construct the live-edge graph $L = (V, E_L)$.
Furthermore, for each node $v \in V$, we independently sample and fix a uniform random number $r_v \sim \mathrm{Unif}[0,1]$.
For a given realization $\omega = (L, \{r_v\}_{v \in V})$, the propagation process under the IC-N model on the live-edge graph $L$ becomes deterministic.
Then, the influence from the seed node set $S$ propagates to all nodes reachable from $S$ on the live-edge graph $L$.
For a given $X$, when each node $v$ first receives influence from a positive neighbor, it becomes positive if $r_v \le q_v^X$ and negative if $r_v > q_v^X$, where $q_v^X$ is the post-intervened susceptibility defined as follows:
\begin{align}
    q_v^X = 
    \begin{cases}
        (1 - \varepsilon_v) q_v, & \text{if $v \in X$,} \\
        q_v, & \text{otherwise.}
    \end{cases}
\end{align}

On the live-edge graph $L$, let $d_L(S, v)$ denote the shortest distance from $S$ to $v$.
If $v$ is unreachable from $S$, set $d_L(S, v) = \infty$.
First, let $R_L(S) = \{u \in V \mid d_L(S,u) < \infty\}$ denote the set of nodes reachable from $S$ on $L$.
For a given $\omega$ and $X$, let $F_\omega(X) = \{u \in R_L(S) \mid r_u > q_u^X\}$ denote the set of nodes in $L$ that are necessarily negative due to the influence of $S$.
Furthermore, let $\mathcal{C}_L(u) = \{v \in V \mid d_L(S, v) = d_L(S, u) + d_L(u, v)\}$ denote the set of nodes $v \in V$ such that the shortest path from $S$ to $v$ on $L$ includes $u$.
According to the rules of the IC-N model, if a node $u$ becomes negative, then every node $v \in \mathcal{C}_L(u)$ also becomes negative. 
Thus, letting $N_\omega(X)$ denote the set of nodes that eventually become negative, we have $N_\omega(X) = \bigcup_{u \in F_\omega(X)} \mathcal{C}_L(u)$.
Define $A_\omega = \{u \in R_L(S) \mid r_u > q_u\}$ and $B_\omega = \{u \in R_L(S) \mid (1 - \varepsilon_u) q_u < r_u \le q_u\}$, then $F_\omega(X) = A_\omega \cup (X \cap B_\omega)$.
Thus, 
\begin{align*}
    |N_\omega(X)| &= \left|\bigcup_{u \in F_\omega(X)} \mathcal{C}_L(u)\right|
    = \left|\bigcup_{u \in A_\omega} \mathcal{C}_L(u) \cup \bigcup_{u \in X \cap B_\omega} \mathcal{C}_L(u)\right| \\
    &= \left|\bigcup_{u \in A_\omega} \mathcal{C}_L(u)\right| + \left|\bigcup_{u \in X \cap B_\omega} H_\omega(u)\right|,
\end{align*}
where we define $H_\omega(u) = \mathcal{C}_L(u) \setminus \bigcup_{u \in A_\omega} \mathcal{C}_L(u)$.
In the above equation, the first term on the right-hand side does not depend on $X$. 
The second term, $h(X \cap B_\omega) = |\bigcup_{u \in X \cap B_\omega}H_\omega(u)|$, is the restriction of the covering function $h(X)$ to the set $B_\omega$.
Thus, by Lemma~\ref{lem:lemE1}, $|N_\omega(X)|$ is submodular.
Therefore, the corrective information spread $\sigma_G^-(S, X) = \mathbb{E}_\omega[|N_\omega(X)|]$ is also submodular.
Monotonicity is trivial.
\end{proof}

This theorem indicates that, although the network prebunking problem is NP-hard, it can be theoretically approximated by the greedy algorithm with an approximation ratio of $(1 - 1/e - \varepsilon_{\mathrm{MC}})$, where $\varepsilon_{\mathrm{MC}}$ represents the estimation error introduced by Monte Carlo (MC) simulations used to evaluate $\sigma_G^-(S, X)$~\cite{nemhauser1978analysis, kempe2003maximizing}.

\section{Approximation Algorithm}
\label{sect:approx_alg}

As mentioned in the previous section, while the greedy algorithm provides a theoretical approximation guarantee for the network prebunking problem~\cite{nemhauser1978analysis, kempe2003maximizing}, it is computationally expensive due to repeated MC evaluations of $\sigma_G^-(S, X)$. 
Moreover, achieving a sufficiently small $\varepsilon_{\mathrm{MC}}$ requires an extremely large number of MC samples. 
This issue is particularly pronounced in our setting, because an intervention on a node $v$ reduces its susceptibility by a fraction of $\varepsilon_v$, and thus the marginal gain obtained by each intervention is generally very small. 
As a result, when the number of MC samples is insufficient, the simulation noise can easily obscure such small differences in gain, leading to inaccurate greedy selections.

To address this challenge, inspired by the design of CMIA-O~\cite{wu2017scalable} for the IBM problem, we propose an approximation algorithm, MIA-NPP, based on the MIA model~\cite{chen2010scalable_KDD}.
The MIA model restricts influence propagation around each node to a local directed tree rooted at that node, called an \textit{arborescence}, thereby significantly improving the efficiency of influence spread estimation.

In the MIA model, for each node $v$, two types of arborescences are considered: the Maximum Influence In-Arborescence (MIIA) and the Maximum Influence Out-Arborescence (MIOA).
To construct MIIA and MIOA, we first define the Maximum Influence Path (MIP) in the graph $G$.
For a path from node $u$ to node $v$, $P_{uv} = ((u, u_1), (u_1, u_2), \dots, (u_l, v))$, the probability that the influence of node $u$ propagates to node $v$ through a path $P_{uv}$ is given by $pp(P_{u v}) = \prod_{e \in P_{uv}} p_{e}$.
Then, the MIP from $u$ to $v$ is defined as:
\begin{align}
\MIP_{uv} = \argmax_{ P_{u v} \in \mathcal{P}_{uv}} ~pp(P_{u v}),
\label{eq:MIP}
\end{align}
where $\mathcal{P}_{uv}$ denotes the set of all paths from $u$ to $v$.
If no path exists from $u$ to $v$, then we define $\MIP_{uv} = \emptyset$.
The MIP can be efficiently computed by assigning edge weights $w_{uv} = -\log p_{uv}$ to each edge $(u, v) \in E$ and applying Dijkstra's algorithm to find the shortest path from $u$ to $v$ in this weighted graph.

Given an influence threshold $\theta$, the MIIA of a node $v$, denoted by $\MIIA(v)$, is defined as the subgraph of $G$ induced by the union of MIPs ending at $v$ whose propagation probabilities satisfy $pp(\MIP_{uv}) \ge \theta$.
Similarly, the MIOA of node $v$, denoted by $\MIOA(v)$, is defined as the subgraph of $G$ induced by the union of MIPs starting from $v$ that satisfy $pp(\MIP_{vu}) \ge \theta$.
Intuitively, $\MIIA(v)$ estimates the local influence toward node $v$ from other nodes in the graph, while $\MIOA(v)$ estimates the local influence from node $v$ to others.
The influence threshold $\theta$ controls the size of the local influence region around a node $v$.

Following the spirit of the MIA model, we assume that the influence to a node $u \in V \setminus S$ propagates only along the edges in $\MIIA(u)$.
For given $S$ and $X$, let $ap^+(v; S, X, \MIIA(u))$ denote the probability that a node $v$ in $\MIIA(u)$ is in the positive state at the end of the diffusion process.
We also denote by $ap^+_t(v; S, X, \MIIA(u))$ the probability that node $v$ is in the positive state at time step $t$, and by $\pi_t^+(v; S, X, \MIIA(u))$ the probability that node $v$ becomes positive for the first time at time $t$.
Note that, for brevity, we omit full arguments and simply write $ap^+(v)$, $ap_t^+(v)$, and $\pi_t^+(v)$ when the context is clear.
Similarly, we define the probabilities $ap^-(v)$, $ap^-_t(v)$, and $\pi_t^-(v)$ for the negative state.
The probability that node $v$ becomes either positive or negative for the first time at time $t$ is denoted by $\pi_t(v)$, which satisfies $\pi_t(v) = \pi_t^+(v) + \pi_t^-(v)$.
Moreover, assuming that influence propagation within $\MIIA(u)$ terminates at step $T$, we have $ap^+(u) = ap^+_T(u)$ and $ap^-(u) = ap^-_T(u)$.

In $\MIIA(u)$, a node $v$ becomes positive for the first time at step $t+1$ if (i) it is inactive (i.e., neither positive nor negative) at time $t$ and (ii) is positively activated by a neighbor that becomes positive at time $t$.
According to the definition of the IC-N model, for a node $v$ to become positive by its neighbors, it must receive only misinformation and no corrective information from its neighbors, and be deceived by it with probability $q_v^X$.
Therefore, $\pi_{t+1}^+(v)$ is written as
\begin{align}
    \pi_{t+1}^+(v) &= q_v^X \left\{\left(1 - \beta_{t}(v)\right) - \left(1 - \beta_{t}^-(v)\right) \right\} \notag \\
    & \quad \times(1 - ap^{+}_{t}(v)) (1 - ap^{-}_{t}(v)),
    \label{eq:pi_p}
\end{align}
where $\beta_{t}(v) = \prod_{w \in \Nin_v} \left(1 - \pi_{t}(w) p_{wv}\right)$ denotes the probability that node $v$ does not receive any information from its neighbors at time $t+1$, and $\beta_{t}^{-}(v) = \prod_{w \in \Nin_v} \left(1 - \pi_{t}^{-}(w) p_{wv}\right)$ denotes the probability that node $v$ does not receive corrective information from its neighbors at time $t+1$.
On the other hand, $\pi_{t+1}^-(v)$ is written as
\begin{align}
    \pi_{t+1}^-(v) &= \left\{q_v^X \left(1 - \beta_{t}^-(w) \right) + (1-q_v^X)  \left(1 - \beta_{t}(w) \right) \right\} \notag \\
    & \quad \times(1 - ap^{+}_{t}(v)) (1 - ap^{-}_{t}(v)),
    \label{eq:pi_n}
\end{align}
since $\pi_{t+1}^-(v) =  \pi_{t+1}(v) - \pi_{t+1}^+(v)$.
Moreover, it follows from the definition that
\begin{align}
    &ap^+_{t+1}(v) = ap^+_{t}(v) + \pi_{t+1}^+(v), \label{eq:pap_t}\\
    &ap^-_{t+1}(v) = ap^-_{t}(v) + \pi_{t+1}^-(v). \label{eq:nap_t}
\end{align}
The initial boundary conditions are defined as follows:
for any node $v \in S$, we set $\pi_0^+(v) = ap^+_0(v) = 1$ and $\pi_0^-(v) = ap^-_0(v) = 0$;
for any node $v \notin S$, we set $\pi_0^+(v) = \pi_0^-(v) = ap^+_0(v) = ap^-_0(v) = 0$.
Based on the recurrence equations (\ref{eq:pi_p})–(\ref{eq:nap_t}) and the boundary conditions described above,
the probability $ap^+(v; S, X, \MIIA(u))$ can be computed recursively via dynamic programming as outlined in Algorithm~\ref{alg:pap}.
For notational simplicity, the pseudocode refers to the node set of $\MIIA(u)$ simply as $\MIIA(u)$, omitting the explicit notation $V(\MIIA(u))$ (the same convention applies to $\MIOA(u)$).

In Algorithm~\ref{alg:pap}, $Z_t$ denotes the set of nodes that may potentially receive information at step $t$ from nodes in $Z_{t-1}$.
The function $\mathrm{child}(w)$ returns the child of node $w$ in $\MIIA(u)$, if one exists; 
by definition, each node has at most one child in $\MIIA(u)$.

The auxiliary variable $\TMP(v)$ stores the probability that node $v$ remains inactive (i.e., neither positively nor negatively activated) at step $t$,
while $\TMPN(v)$ stores the probability that node $v$ is not negatively activated at step $t$.
These values are used to compute the activation probabilities $\pi_{t+1}^+(v)$ and $\pi_{t+1}^-(v)$.
In the main loop, computing $ap_t^+(u)$ requires at most one traversal of the path from seed nodes to $u$ in $\MIIA(u)$,
and since each edge activation attempt takes constant time, the overall time complexity of Algorithm~\ref{alg:pap} is $O(n_\theta)$,
where $n_\theta$ is the maximum size of $\MIIA(u)$ or $\MIOA(u)$ across all $u \in V$.

By using the positive activation probability $ap^+(v; S, X, \MIIA(v))$, we can design an approximation algorithm, MIA-NPP, for selecting prebunking targets
The influence of adding node $w$ to $X$ on the value $ap^+(v; X) = ap^+(v; S, X, \MIIA(v))$ for node $v$ from node $w$ is calculated as
\begin{align}
\Delta(w, v, X) = \left| ap^+(v; X) - ap^+(v; X \cup \{w\}) \right|.
\end{align}
Thus, by summing this quantity over all $v$, we can estimate the influence of node $w$ on the positive activation probabilities of other nodes.
MIA-NPP selects prebunking nodes by iteratively adding the node $w$ that maximizes $\Delta(w, X) = \sum_v \Delta(w, v, X)$ to the current set $X$.
A key advantage of the MIA model is that, when a node $u$ is newly added to $X$, it suffices to update $\Delta(w, X)$ only for nodes $v \in \MIOA(u)$ potentially influenced by $u$, and for nodes $w \in \MIIA(v)$ that may influence those $v$.
This update is carried out by subtracting the previous value of $\Delta(w, v, X)$ (before adding $u$) from $\Delta(w, X)$ and then adding the new value after $u$ is included.
This localized update strategy significantly reduces computation time compared to recomputing $ap^+(v; S, X, \MIIA(v))$ for all $v$ in each iteration.
The complete pseudocode for MIA-NPP is presented in Algorithm~\ref{alg:MIA-NPP}.

\begin{algorithm}[t]
  \caption{$ap^+(u; S, X, \MIIA(u))$}
  \label{alg:pap}
  \SetKwInOut{Input}{Input}
  \SetKwInOut{Output}{Output}
  \DontPrintSemicolon

  set $Z_0 = S \cap \MIIA(u)$, $Z_t = \emptyset$ for $t \ge 2$\;
  initialize $\pi_0^+(v), \pi_0^-(v), ap_0^+(v), ap_0^-(v)$ for all $v \in \MIIA(u)$ according to boundary conditions\;
  set $t = 0$\;
  \While{$Z_t \neq \emptyset$}{
    set $\TMP(v) = \TMPN(v) = 1$ for all $v \in \MIIA(u) \setminus S$\;
    \For{$w \in Z_t$}{
      $v = \mathrm{child}(w)$\;
      $Z_{t+1} \leftarrow Z_{t+1} \cup \{v\}$\;
      $\TMP(v) \leftarrow \TMP(v) \times (1 - (\pi_t^+(w) + \pi_t^-(w))p_{wv})$\;
      $\TMPN(v) \leftarrow \TMPN(v) \times (1 - \pi_t^-(w)p_{wv})$\;
    }
    \For{$v \in Z_{t+1}$}{
      $\pi_{t+1}^+(v) = q_v^X \left\{(1 - \TMP(v)) - (1 - \TMPN(v))\right\}(1 - ap_t^+(v)) (1 - ap_t^-(v))$\;
      $\pi_{t+1}^-(v) = \left\{ q_v^X (1 - \TMPN(v)) + (1 - q_v^X) (1 - \TMP(v))\right\} (1 - ap_t^+(v)) (1 - ap_t^-(v))$\;
      $ap_{t+1}^+(v) = ap_t^+(v) + \pi_{t+1}^+(v)$\;
      $ap_{t+1}^-(v) = ap_t^-(v) + \pi_{t+1}^-(v)$\;
    }
    $t \leftarrow t + 1$\;
  }
  \Return{$ap_t^+(u)$}
\end{algorithm}

\begin{algorithm}[t]
  \caption{MIA-NPP$(G, S, k, \theta)$}
  \label{alg:MIA-NPP}
  \DontPrintSemicolon

  set $X = \emptyset$, $U = \emptyset$\;
  set $\Delta(v) = 0$ for $v \in V$\;
  \For{$s \in S$}{
    construct $\MIOA(s)$\;
    $U \leftarrow U \cup \MIOA(s)$\;
  }
  \For{$u \in U$}{
    construct $\MIIA(u)$\;
    compute $ap^+(u; S, X, \MIIA(u))$\;
    \For{$v \in \MIIA(u)$}{
      compute $ap^+(u; S, X \cup \{v\}, \MIIA(u))$\;
      $\Delta(v, u, X) = |ap^+(u; S, X, \MIIA(u)) - ap^+(u; S, X \cup \{v\}, \MIIA(u))|$\;
      $\Delta(v) \leftarrow \Delta(v) + \Delta(v, u, X)$\;
    }
  }
  \For{$i = 1, 2, \dots, k$}{
    $u = \arg\max_{v \in U} \Delta(v)$\;
    construct $\MIOA(u)$\;
    \For{$v \in \MIOA(u) \cap U$}{
      \For{$w \in \MIIA(v)$}{
        compute $ap^+(v; S, X \cup \{w\}, \MIIA(v))$\;
        $\Delta(w, v, X) = |ap^+(v; S, X, \MIIA(v)) - ap^+(v; S, X \cup \{w\}, \MIIA(v))|$\;
        $\Delta(w) \leftarrow \Delta(w) - \Delta(w, v, X)$\;
      }
    }
    $X \leftarrow X \cup \{u\}$\;
    $U \leftarrow U \setminus \{u\}$\;
    \For{$v \in \MIOA(u) \cap U$}{
      compute $ap^+(v; S, X, \MIIA(v))$\;
      \For{$w \in \MIIA(v)$}{
        compute $ap^+(v; S, X \cup \{w\}, \MIIA(v))$\;
        $\Delta(w, v, X) = |ap^+(v; S, X, \MIIA(v)) - ap^+(v; S, X \cup \{w\}, \MIIA(v))|$\;
        $\Delta(w) \leftarrow \Delta(w) + \Delta(w, v, X)$\;
      }
    }
  }
  \Return{$X$}
\end{algorithm}

\paragraph{Complexity analysis of MIA-NPP}
Let $n = |V|$ and $n_s = |S|$.
We define $n_{\theta}$ and $t_{\theta}$ as the maximum size and the maximum construction time of any $\MIOA(u)$ or $\MIIA(u)$, respectively.
Since the construction time is linear in the size, it holds that $n_{\theta} = O(t_{\theta})$.
The candidate set $U$ for prebunking targets has size $O(n_s n_{\theta})$.
As previously discussed, the time complexity of computing $ap^+(u, S, X, \MIIA(u))$ using Algorithm~\ref{alg:pap} is $O(n_{\theta})$.
A max-heap is employed to store $\Delta(v)$ values, which can be initialized in $O(n)$ time.
Therefore, the time complexity of lines 6--12 is $O(n_s n_{\theta}(t_{\theta} + n_{\theta}(n_{\theta} + n)))$, which simplifies to $O(n_s t_{\theta}^2(t_{\theta} + n))$.
In the main loop, selecting a new prebunking node from the max-heap requires $O(1)$ time, and updating each $\Delta(w)$ incurs $O(\log n)$ time.
Thus, the total complexity of the main loop is $O(k (t_{\theta} + n_{\theta}^2(n_{\theta} + \log n)))$, simplifying to $O(kt_{\theta}^2(t_{\theta} + \log n))$.
Accordingly, thel time complexity of MIA-NPP is $O(n_s t_\theta^2 (t_\theta + n) + kt_\theta^2(t_\theta + \log n))$.

For space complexity, the algorithm maintains the candidate set $U$,
the values $ap^+(u, S, X, \MIIA(u))$ for each $u \in U$ and $v \in \MIIA(u)$,
the $\MIIA(u)$ subgraph for each $u \in U$,
and $\Delta(v)$ for each $v \in V \setminus S$.
Thus, the overall space complexity is $O(n_s n_{\theta}^2 + n)$.

\paragraph{Comparison with greedy baseline.}
To position MIA-NPP against the theoretical greedy benchmark, we evaluated it on small synthetic graphs, where the CELF implementation~\cite{leskovec2007cost} of the greedy algorithm can be executed within a reasonable time budget. 
The results show that MIA-NPP achieves comparable performance to the greedy method. 
In terms of efficiency, MIA-NPP completes within roughly 20 seconds on average, whereas the greedy algorithm requires about two orders of magnitude more time. 
Although MIA-NPP does not provide a strict approximation guarantee, it empirically achieves near-greedy performance with substantially lower computational cost.
Detailed results and runtime statistics are provided in Appendix~\ref{appendix:compare_greedy}.

\section{Numerical Experiments}
In this section, we conduct numerical experiments on real-world network datasets to address the following research questions:
\textbf{RQ1}: How effective is MIA-NPP in suppressing the spread of misinformation compared to simple heuristics and approximation algorithms for related problems?
\textbf{RQ2}: Does MIA-NPP remain effective when the misinformation susceptibility $q_v$ and the individual intervention effect $\varepsilon_v$ of each node are uncertain?

We implemented all algorithms in Python 3.10 and ran the experiments on a MacBook Pro (Apple M1 Max, 64GB RAM, macOS Sonoma 14.4.1) without using GPU acceleration or parallelization.
All data and code used in our experiments are publicly available\footnote{\url{https://github.com/s-furutani/network-prebunking-problem}}.

\subsection{Datasets}
In this experiment, we utilize the UPFD dataset~\cite{dou2021user} to construct a social network in which each node has the susceptibility parameter.
UPFD is a derived dataset built on top of FakeNewsNet~\cite{shu2018fakenewsnet}, containing news diffusion networks along with veracity labels (true or false) for news articles, annotated by PolitiFact or GossipCop. 
Each diffusion network consists of a root node representing a news article and user nodes representing Twitter users who shared it.
Edges in the network capture the propagation paths of the news: 
edges between the root node and user nodes represent direct tweets containing the news article, while edges between user nodes indicate retweets.
There are 314 diffusion networks from PolitiFact (157 labeled as fake) and 5,464 from GossipCop (2,732 labeled as fake). 
To construct a unified diffusion network, we merge the edge sets of all individual networks and treat all root nodes (i.e., news articles) as a single node. 
We refer to the resulting networks as the PolitiFact network and the GossipCop network, respectively. 
The PolitiFact network contains 30,813 nodes and 33,488 edges, while the GossipCop network consists of 75,915 nodes and 85,308 edges.

\subsection{Baselines}
Since no existing algorithm is to solve the network prebunking problem, we compare our proposed algorithm, MIA-NPP, with the following simple heuristics and approximation algorithms for related problems:
\begin{itemize}
    \item \textbf{Random}: selects $k$ nodes uniformly at random.
    \item \textbf{Gullible}: selects the $k$ nodes with the highest misinformation susceptibility $q_v$.
    \item \textbf{Degree}: selects the $k$ nodes with the highest out-degree.
    \item \textbf{Distance}: selects the $k$ nodes closest to the seed nodes in terms of shortest path distance, where edge weights are defined as $w_{uv} = -\log p_{uv}$.
    \item \textbf{AdvancedGreedy}~\cite{xie2023minimizing}: is an approximation algorithm for the influence minimization problem under the IC model, based on graph sampling and dominator trees. The nodes selected for blocking nodes by AdvancedGreedy are used as prebunking targets.
    \item \textbf{CMIA-O}~\cite{wu2017scalable}: is an approximation algorithm for the influence blocking maximization problem under the COICM model. The seed nodes selected for spreading corrective information by CMIA-O are used as prebunking targets.
\end{itemize}

\subsection{Parameter Settings}
\label{sect:param_setting}

In this experiment, we set the propagation probability $p_{uv}$ of each edge $(u, v)$ to be proportional to the number of news items shared by user $v$, denoted $n_v^{\mathrm{share}}$, as $p_{uv} = c \cdot n_v^{\mathrm{share}} / D$, where $D$ is the total number of news items, and the proportionality constant $c$ is set to $c=30$ based on preliminary simulations, such that approximately $10\%$ nodes of the network is activated on average when the information propagates from the root node under the IC-N model.
The seed set $S$ consists only of the root node $r$, i.e., $S = \{ r \}$.
The susceptibility $q_v$ for each user node $v$ is computed as $q_v = (n_v^{\mathrm{fake}} + 1)/(n_v^{\mathrm{share}} + 2)$, where $n_v^{\mathrm{fake}}$ is the number of fake news items shared by the user $v$. 
Note that since most users share only one news item, we apply additive smoothing to avoid extreme values of $q_v$. 
Without smoothing (i.e., using the raw ratio $q_v = n_v^{\mathrm{fake}} / n_v^{\mathrm{share}}$), the resulting values of $q_v$ will be either $0$ or $1$ for many users.
The individual intervention effect $\varepsilon_v$ for each user node $v$ is sampled from a truncated normal distribution $\mathcal{N}_{[0, 1]}(\mu_\varepsilon, \sigma_\varepsilon^2)$, with mean $\mu_{\varepsilon} = 0.5$ and variance $\sigma_{\varepsilon}^2 = 0.1$.
The susceptibility and individual intervention effect of root node $r$ are respectively set to $q_r=1$ and $\varepsilon_r=0$.

For AdvancedGreedy, we set the number of sampled graphs to $\rho = 100$. 
For both CMIA-O and MIA-NPP, the influence threshold parameter for constructing MIIA and MIOA structures is set to $\theta = 0.001$.
For MIA-NPP, we confirmed that the choice of $\theta$ has minimal impact on performance, while larger values of $\theta$ reduce execution time (see Appendix~\ref{sect:sensitivity_analysis}).

\subsection{Results}

MIA-NPP utilizes information on each node's susceptibility $q_v$ and individual intervention effect $\varepsilon_v$ to select prebunking targets.
In this section, we evaluate the effectiveness of MIA-NPP under two conditions: when $q_v$ and $\varepsilon_v$ are fully observable (Section~\ref{sect:eval_perfect}), and when they are uncertain (Section~\ref{sect:eval_uncertain}).

To assess the effectiveness of each algorithm in suppressing the spread of misinformation, we conducted $1000$ diffusion simulations per algorithm, applying interventions to $k \in \{0, 20, 40, \dots, 200\}$ nodes selected by each algorithm.
We computed the relative spread $y(k)/y(0)$, where $y(k)$ is the number of nodes who eventually shared misinformation averaged over 1000 runs with an intervention of $k$ nodes.
This metric indicates the remaining extent of misinformation relative to the no-intervention case ($k = 0$); 
a smaller value implies more effective suppression of misinformation.

\subsubsection{Evaluation under perfect observations (RQ1)} 
\label{sect:eval_perfect}

First, we evaluate the effectiveness of MIA-NPP in the case where $q_v$ and $\varepsilon_v$ for each node are fully available for target selection.
Figure~\ref{fig:results_real} shows the misinformation suppression effect of each algorithm on the PolitiFact and GossipCop networks.
As shown, the Random baseline fails to suppress misinformation, indicating that randomly selecting prebunking targets is insufficient.
The Gullible, which utilize the node's susceptibility, and the Degree and the Distance, which refer to the network structure, can suppress misinformation compared to the Random, but the suppression effect is relatively small.
The approximation algorithms for related problems, AdvancedGreedy and CMIA-O, perform slightly better than the simple heuristics on the PolitiFact network, but show almost no advantage on the GossipCop network.
In contrast, the MIA-NPP consistently achieves superior suppression effect across both networks, demonstrating its effectiveness in identifying critical targets for prebunking.
Similar results are observed in other social network datasets as well (see Appendix~\ref{sect:results_for_other_datasets}).

To investigate the trends of selected prebunking targets, we visualized the first $100$ prebunking targets selected by each algorithm in a two-dimensional plane of the propagation probability $pp(P_{rv}^\ast)$ from root node $r$ and the misinformation susceptibility $q_v$ in the GossipCop network (Fig.~\ref{fig:intervention_target_map}).
From this figure, it can be seen that MIA-NPP prioritizes nodes that are both likely to receive misinformation from the root node and highly susceptible to it.

\begin{figure}[t]
    \centering
    \includegraphics[width=1.0\linewidth]{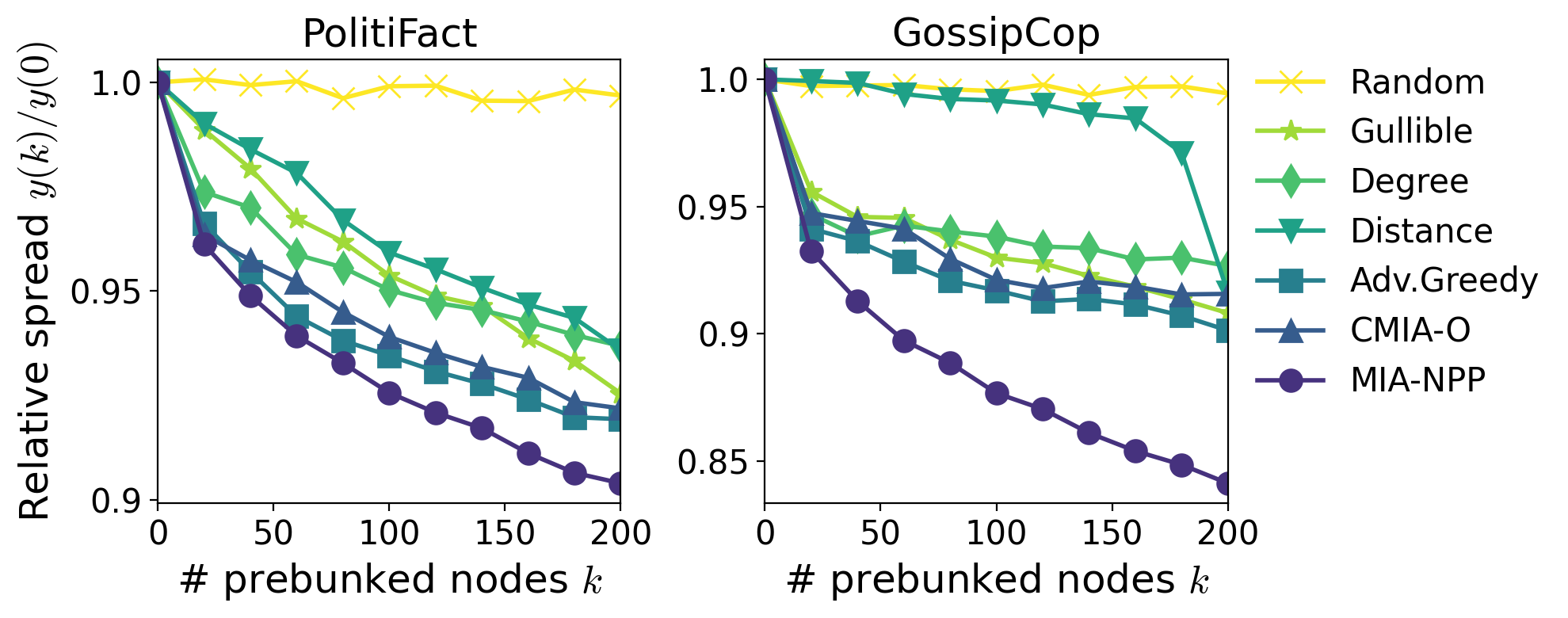}
    \caption{Misinformation suppression effects of each algorithm on the PolitiFact and GossipCop networks.}
    \label{fig:results_real}
\end{figure}

\begin{figure}[t]
    \centering
    \includegraphics[width=1.0\linewidth]{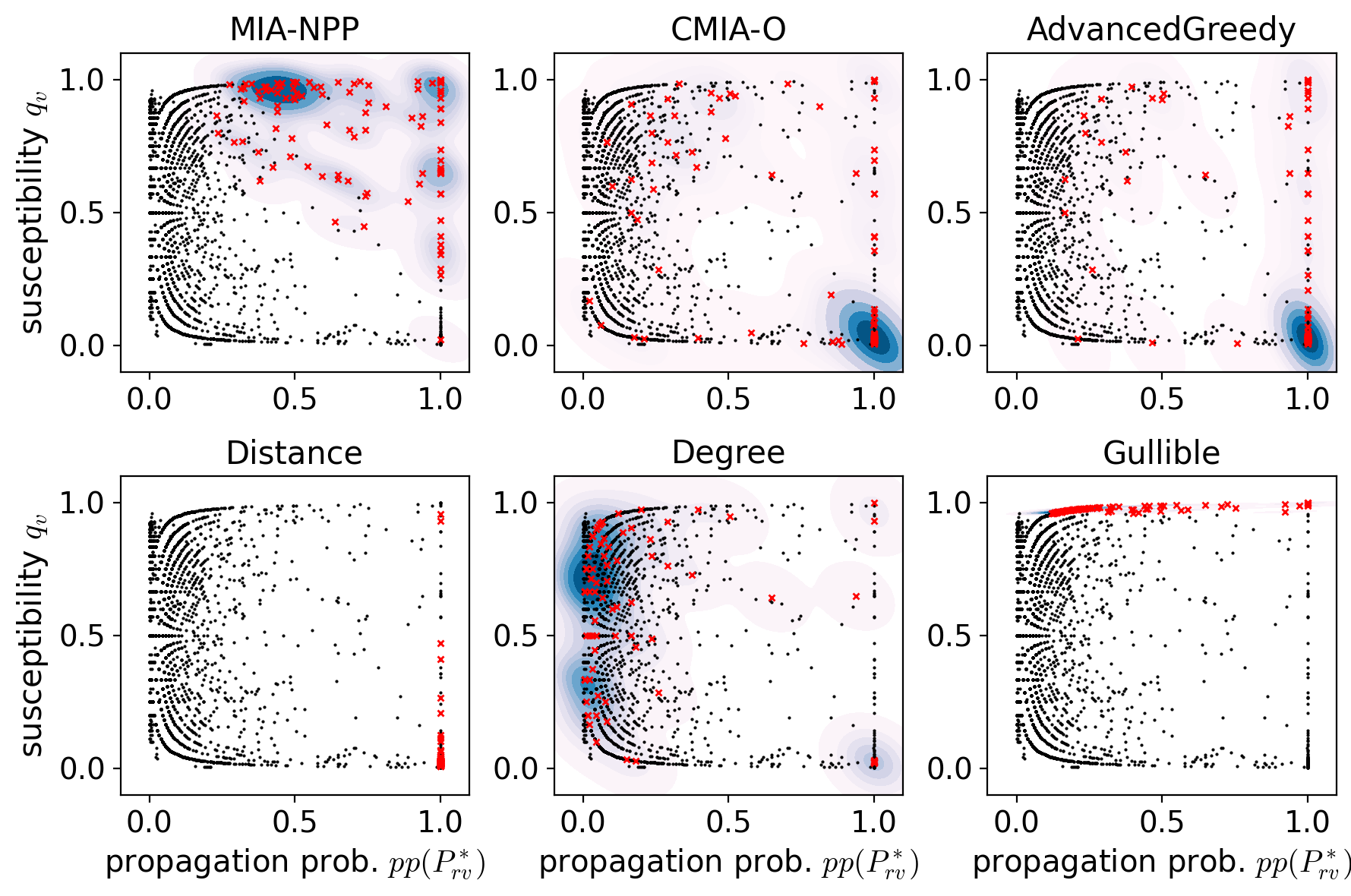}
    \caption{Visualizations of the first 100 prebunking targets selected by each algorithm on the GossipCop network, plotted on a 2D plane of $pp(P_{rv}^\ast)$ and $q_v$. Nodes are shown as black dots, with selected targets marked as red Xs.}
    \label{fig:intervention_target_map}
\end{figure}

\subsubsection{Evaluation under uncertain observations (RQ2)} 
\label{sect:eval_uncertain}

In Section~\ref{sect:eval_perfect}, we assumed that the true values of $q_v$ and $\varepsilon_v$ were fully observable when selecting prebunking targets using MIA-NPP.
However, in real-world settings, it is rare to observe the true susceptibility of users with perfect accuracy, and such measurements often involve noise.
Moreover, it is generally extremely difficult to estimate the individual intervention effect for each user.
Therefore, in this section, we evaluate the extent to which MIA-NPP remains effective under conditions where these parameters are uncertain.

To model uncertainty, we assume that the observed susceptibility $q_v^{\mathrm{obs}}$ includes observational noise $\delta \sim \mathcal{N}(0, \sigma_\delta^2)$; that is, $q_v^{\mathrm{obs}} = q_v + \delta$.
Additionally, we assume that the individual intervention effect $\varepsilon_v$ for each node is not directly observable, and instead only the average intervention effect is available; that is, $\varepsilon_v^\mathrm{obs} = \langle \varepsilon \rangle = 0.5$ for all $v \in V$.
Under this setting, we evaluate the effectiveness of MIA-NPP in suppressing misinformation while varying the noise variance $\sigma_\delta^2 \in \{0.0, 0.1, 0.5, 1.0\}$.

The evaluation results are shown in Fig.~\ref{fig:results_uncertain}.
As expected, the misinformation suppression effect of MIA-NPP decreases as the observational noise increases.
However, when the observational noise is sufficiently small, MIA-NPP still maintains a higher suppression effect than AdvancedGreedy (yellow dashed line), which achieves second-best performance in Fig.~\ref{fig:results_real}.
In addition, the results for $\sigma_\delta^2=0.0$ are almost identical to those of MIA-NPP (red dashed line) under perfect observation.
This suggests that even if only the average intervention effect $\langle \varepsilon \rangle$ is available instead of the individual intervention effect $\varepsilon_v$ in the selection of prebunking targets, it has little effect on the suppression performance of MIA-NPP.

\begin{figure}[t]
    \centering
    \includegraphics[width=1.0\linewidth]{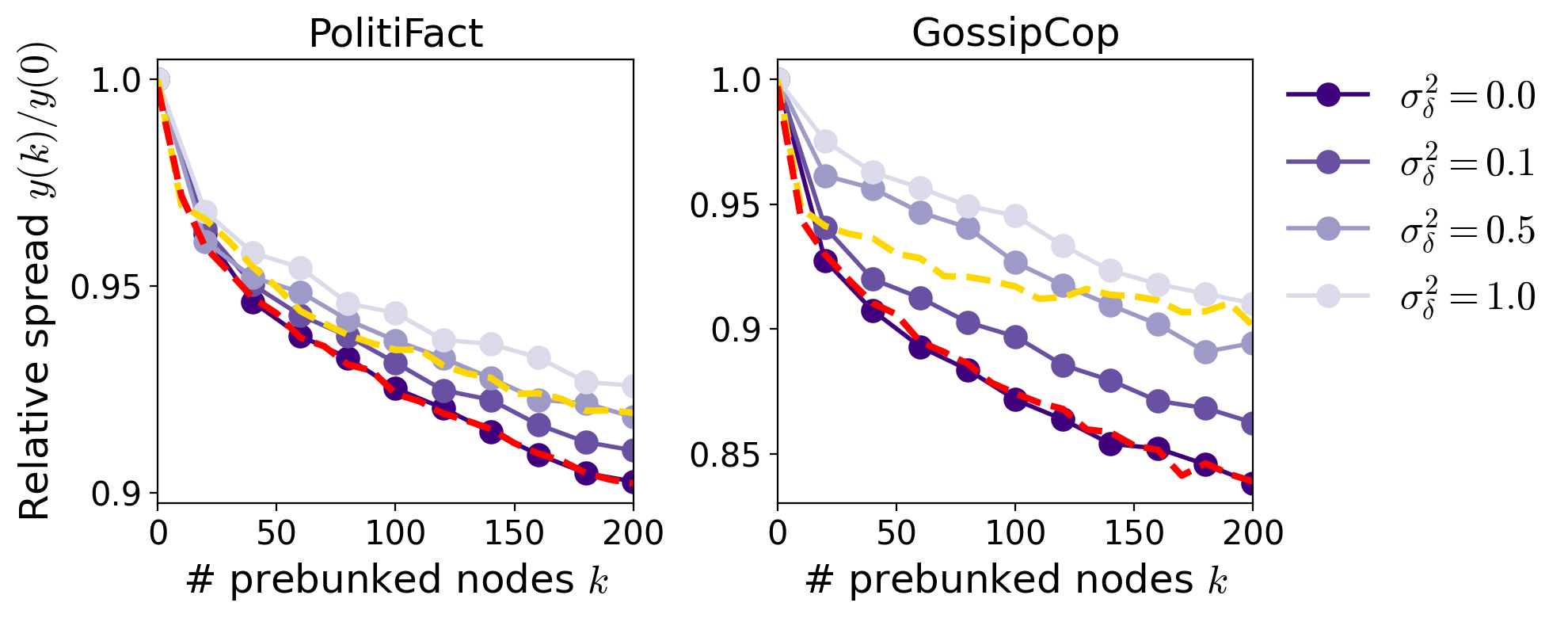}
    \caption{Misinformation suppression effects of MIA-NPP under uncertain observations on the PolitiFact and GossipCop networks.
    Red and yellow dashed lines respectively indicate the results of MIA-NPP and AdvancedGreedy under perfect observations, for comparison.}
    \label{fig:results_uncertain}
\end{figure}

\section{Conclusion}

In this study, we formulated a combinatorial optimization problem, called the network prebunking problem, to find the optimal set of prebunking targets that minimizes the spread of misinformation in a social network under an information diffusion model in which users who receive misinformation choose their actions probabilistically according to their susceptibility to misinformation.
We proved that the problem is NP-hard and that its objective function is monotone and submodular, which provides a theoretical foundation for approximation guarantees of greedy algorithms. 
However, since the greedy approach is computationally expensive and does not scale to large networks, we proposed an efficient approximation algorithm, MIA-NPP, based on the MIA framework, which restricts influence propagation to local arborescences rooted at each node. 
Furthermore, numerical experiments using real-world network datasets evaluated the effectiveness of MIA-NPP in both cases where misinformation susceptibility and individual intervention effects were fully observable and where they were only partially observable.
The results revealed that MIA-NPP can effectively suppress the spread of misinformation compared to simple heuristics based on user characteristics or network structure, as well as approximation algorithms for related problems.

Importantly, our work is the first to systematically investigate prebunking as an alternative to conventional intervention strategies for mitigating the spread of misinformation in social networks, such as blocking and clarification.
Although prebunking may have a relatively modest suppression effect because it does not prohibit intervened users from sharing misinformation (see Appendix~\ref{sect:compare_PBC}), it offers clear advantages in terms of ethical acceptability and implementation feasibility.
Our results suggest that even under limited observability and intervention resources, strategically designed prebunking can serve as a robust and realistic approach to counter misinformation.





\bibliographystyle{ACM-Reference-Format}
\bibliography{ref}

\appendix

\section{Notations}
\label{appendix:notation}
For clarity, Table~\ref{tab:notation} summarizes the main symbols and notations used throughout this paper. 

\begin{table}[t!]
\centering
\caption{Table of notations}
\label{tab:notation}
\renewcommand{\arraystretch}{1.1}
\begin{tabularx}{\columnwidth}{lX}
\toprule
Symbol & Description \\
\midrule
$G=(V,E)$ & Directed graph with node set $V$ and edge set $E$ \\
$S$ & Seed set of initially positive nodes \\
$X$ & Set of prebunking (intervention) targets \\
$k$ & Intervention budget \\
$p_{uv}$ & Propagation probability from node $u$ to $v$ \\
$q_v$ & Misinformation susceptibility of node $v$ \\
$\varepsilon_v$ & Individual intervention effect on node $v$ \\
$q_v^X$ & Post-intervention susceptibility \\
$ap^+(v;S,X,G)$ & Probability that node $v$ eventually shares misinformation\\
$ap^-(v;S,X,G)$ & Probability that node $v$ eventually shares corrective information\\
$\sigma_G^+(S,X)$ & Expected spread of misinformation \\
$\sigma_G^-(S,X)$ & Expected spread of corrective information \\
$P_{uv}$ & A path from node $u$ to node $v$ in graph $G$ \\
$\MIP_{uv}$ & Maximum Influence Path from $u$ to $v$ \\
$pp(P_{uv})$ & Propagation probability along path $P_{uv}$ \\
$\theta$ & Influence threshold in the MIA model \\
$\MIIA(v)$ & Maximum Influence In-Arborescence of $v$ \\
$\MIOA(v)$ & Maximum Influence Out-Arborescence of $v$ \\
$ap_t^+(v)$ & Probability that $v$ is positive at time $t$ \\
$ap_t^-(v)$ & Probability that $v$ is negative at time $t$ \\
$\pi_t^+(v)$ & Probability that $v$ becomes positive \textit{for the first time} at time $t$ \\
$\pi_t^-(v)$ & Probability that $v$ becomes negative \textit{for the first time} at time $t$ \\
$\beta_t^-(v)$ & Probability that node $v$ does not receive corrective information at time $t+1$ \\
$\beta_t(v)$ & Probability that node $v$ does not receive any information at time $t+1$ \\
\bottomrule
\end{tabularx}
\end{table}

\section{Network Prebunking Problem Under Uncertain Seed Nodes}
\label{sec:NPP-US}
In the network prebunking problem of Eq.~(\ref{eq:NPP}), we assume that the seed node set $S$ is fixed.
This corresponds to the case where misinformation spreads from specific opinion leaders or media accounts on social media.
In reality, however, misinformation is not always disseminated by suce sources; 
it can also originate from ordinary users.
To capture this, we extend the problem to the case where the seed set $S$ is drawn from a probability distribution rather than being fixed.

In graph $G$, let $s(v)$ denote the probability that each node $v \in V$ becomes a seed node.
Then, the probability $s(S)$ that the set of seed nodes is $S$ is given by
\begin{align}
s(S) = \prod_{v \in S} s(v) \prod_{v \not \in S} (1 - s(v))
\end{align}
Using this, the network prebunking problem in graph $G$ with uncertain seed nodes can be formulated as follows:
\begin{align}
    X^* = \argmin_{X \subseteq V} ~\sum_{S \subseteq 2^V} s(S)\, \sigma_G^+(S, X) \quad \text{s.t.} \quad |X|=k.
    \label{eq:NPP-US}
\end{align}

In fact, this problem can be interpreted as the network prebunking problem with a fixed seed node in a new graph.
Consider the new graph $H = (V_H, E_H)$ obtained by adding a dummy node $\alpha$ to graph $G = (V, E)$, where $V_H = V \cup \{\alpha\}$ and $E_H = E \cup \{(\alpha, v) \mid v \in V\}$.
For the newly added edge $(\alpha, v) \in E_H \setminus E$ in graph $H$, let the propagation probability be $p_{\alpha v} = s(v)$.
We set the susceptibility of the dummy node $\alpha$ to $q_\alpha = 1$ and the individual intervention effect to $\varepsilon_\alpha = 0$.
Then, in the graph $H$, the network prebuking problem with seed node $S = \{\alpha\}$ is given by
\begin{align}
    X^* = \argmin_{X \subseteq V} ~~ \sigma_{H}^+(\{\alpha\}, X) \quad \text{s.t.} \quad |X|=k.
\end{align}
This is equivalent to the network prebunking problem under uncertain seed nodes in Eq.~(\ref{eq:NPP-US}).
Therefore, for the network prebunking problem under uncertain seed nodes, we can also apply MIA-NPP directly by introducing a dummy node.

\section{Relation to other problems}
\label{sect:other_problems}

In this section, we discuss the relation of the network prebunking problem to the Influence Minimization (IMIN) problem and the Influence Blocking Maximization (IBM) problem.

\subsection{Relation to IMIN Problem}
In a graph $G$, under the IC model, the expected misinformation spread for a given seed node set $S \subset V$ and a blocking node set $B \subset V \setminus S$ can be written as $\sigma_{G[V \setminus B]}(S) = \sum_{v \in V} ap(v; S, G[V \setminus B])$. 
Therefore, the IMIN problem by node blocking can be formulated as:
\begin{align}
    B^\ast = \argmin_{B \subseteq V \setminus S} ~\sigma_{G[V \setminus B]}(S) \quad \text{s.t.}\quad |B|=k.
    \label{eq:IMIN}
\end{align}

In the network prebunking problem, assume that $q_v = \varepsilon_v = 1$ for all nodes $v \in V$ and that the propagation probabilities for each edge $(u,v) \in E$ are given by $p_{uv}^+ = p_{uv}$ and $p_{uv}^- = 0$.
Under these assumptions, a non-intervened node always shares the misinformation upon receiving it, while an intervened node share neither misinformation nor corrective information, implying that prebunking practically functions as node blocking.
Therefore, the network prebunking problem includes the IMIN problem as a special case.

\subsection{Relation to IBM Problem}
The IBM problem aims to hinder the spread of misinformation originating from a seed set $S_M$ by injecting corrective information from an optimal seed set $S_T \subseteq V \setminus R_\tau(S_M)$ after a delay of $\tau$ steps, where $R_\tau(S_M)$ denotes the set of nodes reachable from $S_M$ within $\tau$ steps.
For given $S_M$ and $S_T$, let $U(S_M, S_T)$ denote the set of nodes that shared misinformation after diffusion process.
Then, the blocked influence of $S_T$ is defined as $\varphi_G(S_M, S_T) = \mathbb{E}[|U(S_M, \emptyset) \setminus U(S_M, S_T)|]$, which represents the expected number of nodes that would have adopted misinformation without $S_T$, but did not do so because of the presence of $S_T$.
The IBM problem can thus be formulated as follows:
\begin{align}
    S_T^\ast = \argmax_{S_T \subseteq V \setminus R_\tau(S_M)} ~\varphi_{G}(S_M, S_T), \quad \text{s.t.}\quad |S_T|=k.
    \label{eq:IBM}
\end{align}
Note that since $U(S_M, S_T)$ is monotonically decreasing with respect to $S_T$, the same problem can be expressed as finding $S_T^\ast$ that minimizes the misinformation spread $\sigma_{G}^M(S_M, S_T) = \mathbb{E}[|U(S_M, S_T)|]$ as follows:
\begin{align}
    S_T^\ast = \argmin_{S_T \subseteq V \setminus R_\tau(S_M)} ~\sigma_{G}^M(S_M, S_T), \quad \text{s.t.}\quad |S_T|=k.
    \label{eq:IBM2}
\end{align}

In the network prebunking problem, suppose that $q_v = \varepsilon_v = 1$ for all nodes $v \in V$.
Under this setting, a non-intervened node always shares the misinformation upon receiving it, while an intervened node is triggered by the arrival of misinformation to immediately share corrective information.
This behavior can be captured by defining a node-specific delay $\tau_v = t_S^+(v)$, where $t_S^+(v)$ is the time at which node $v$ first receives misinformation from the seed set $S$.
Accordingly, the network prebunking problem with $q_v = \varepsilon_v = 1$ is equivalent to the IBM problem with node-specific delays assigned to each node.

\section{Comparison with Greedy Baseline on Small Synthetic Graphs}
\label{appendix:compare_greedy}

To empirically position MIA-NPP against the theoretical greedy benchmark, we conducted additional experiments on small synthetic networks where the greedy algorithm with CELF~\cite{leskovec2007cost} can be executed within a reasonable time. 
Specifically, we evaluated both methods on Erdős-Rényi (ER) and Barabási-Albert (BA) networks with 500 nodes, where the ER network was generated with edge probability $p_{\mathrm{ER}} = 0.05$ and the BA network with attachment parameter $m_{\mathrm{BA}} = 10$. 
The Random baseline, which selects prebunking nodes uniformly at random, was also included for reference.

Figure~\ref{fig:appendix_greedy} compares the relative misinformation spread $y(k)/y(0)$ as a function of the number of prebunked nodes $k$. 
When the number of MC samples $n_{\mathrm{MC}}$ for the greedy algorithm is sufficiently large ($n_{\mathrm{MC}} \ge$ 20,000), its performance is marginally superior; 
however, MIA-NPP with $\theta=0.001$ closely matches it across all $k$ values. 
For smaller $n_{\mathrm{MC}}$, the greedy results become noticeably unstable due to larger estimation errors $\varepsilon_{\mathrm{MC}}$, reflecting the difficulty of distinguishing the small marginal gains between interventions. 

Table~\ref{tab:runtime} summarizes the runtime of the MIA-NPP and the greedy algorithm. 
MIA-NPP completes in roughly 20 seconds, whereas the greedy algorithm requires several thousand seconds even for a small $n_{\mathrm{MC}}$ and exceeds 20,000 seconds for large $n_{\mathrm{MC}}$, demonstrating that MIA-NPP achieves near-greedy effectiveness with approximately two orders of magnitude lower computational cost.

\begin{figure}[t]
    \centering
    \includegraphics[width=\linewidth]{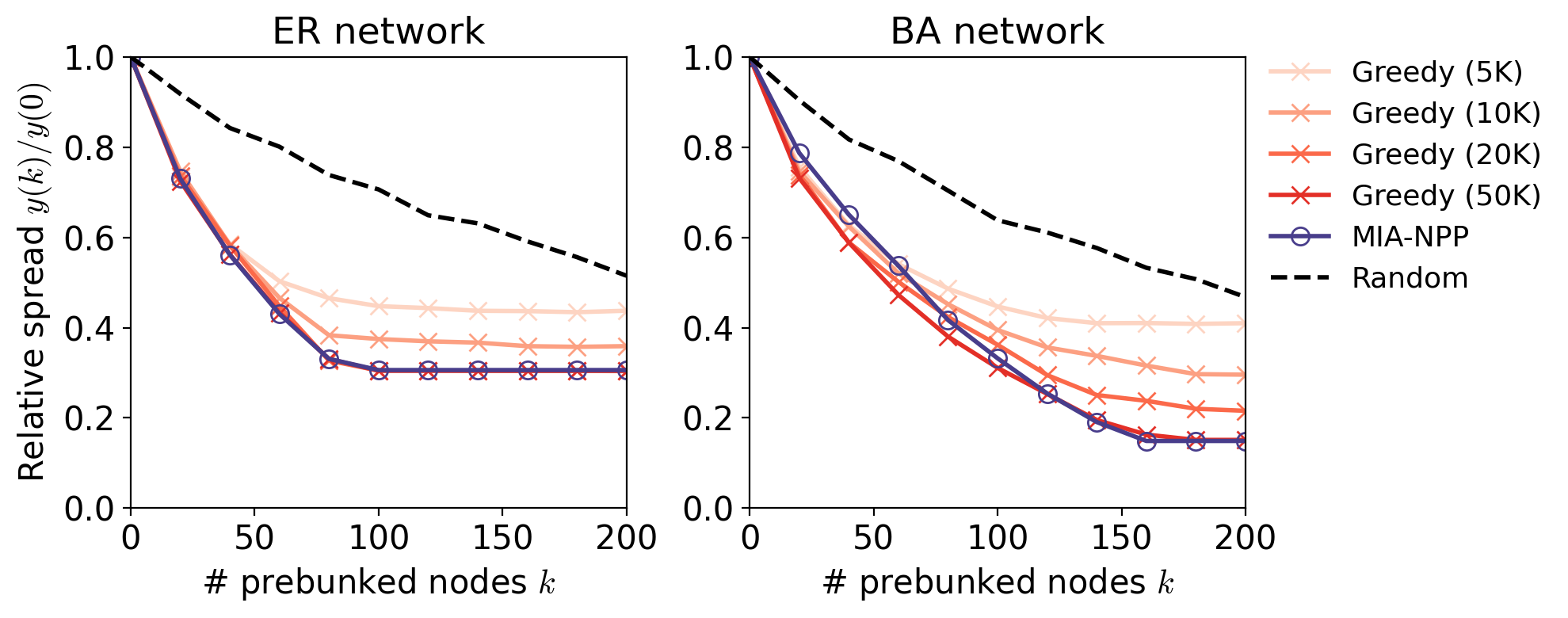}
    \caption{
    Performance comparison of MIA-NPP and the greedy algorithm on ER ($n=500$, $p=0.05$) 
    and BA ($n=500$, $m=10$) networks.
    }
    \label{fig:appendix_greedy}
\end{figure}

\begin{table}[t]
\centering
\caption{Runtime (in seconds) of MIA-NPP and the greedy algorithm with different numbers of MC samples.}
\label{tab:runtime}
\begin{tabular}{lrr}
\toprule
 & ER network & BA network\\
\midrule
MIA-NPP & 20.76 & 17.49 \\
Greedy ($n_{\mathrm{MC}}=$ 5,000)  & 4,060.92  & 5,032.28 \\
Greedy ($n_{\mathrm{MC}}=$ 10,000) & 8,756.90  & 10,954.64 \\
Greedy ($n_{\mathrm{MC}}=$ 20,000) & 17,938.33 & 22,692.15 \\
Greedy ($n_{\mathrm{MC}}=$ 50,000)  & 43,113.94  & 61,982.90 \\
\bottomrule
\end{tabular}
\end{table}

\section{Sensitivity and execution time for influence threshold $\theta$}
\label{sect:sensitivity_analysis}
In this section, we analyze the sensitivity of misinformation suppression effect of MIA-NPP and the execution time with respect to the influence threshold parameter $\theta$.
Under the same experimental conditions as in Section~\ref{sect:eval_perfect}, we vary $\theta$ among $\{0.1, 0.01, 0.001, 0.0001\}$ and report the resulting suppression effect and execution time in Fig.~\ref{fig:sensitivity} and Fig.~\ref{fig:execution_time}, respectively.
As shown in Fig.~\ref{fig:sensitivity}, the value of $\theta$ has almost no effect on the suppression effect of MIA-NPP.
This indicates that the set of intervention targets selected by MIA-NPP remains largely unchanged even when $\theta$ is decreased.
This can be intuitively understood as follows.
The parameter $\theta$ determines how broadly the candidate set $U = U_\theta$ for intervention targets is considerd.
When the threshold is reduced from $\theta_1$ to a smaller value $\theta_2$ (i.e., $\theta_2 < \theta_1$), the candidate set $U_{\theta_2}$ expands, but the newly added nodes $v \in U_{\theta_2} \setminus U_{\theta_1}$ typically have very low probabilities of being reached from any seed node $s$ via their maximum influence paths $\MIP_{sv}$ (i.e., $p_{su} \le \theta_1$).
Therefore, MIA-NPP is also unlikely to select such nodes as intervention targets.

Moreover, Fig.~\ref{fig:execution_time} shows that the execution time of MIA-NPP decreases as the value of $\theta$ increases. 
This is because computing the MIIA and MIOA structures constitutes the primary bottleneck of MIA-NPP, and a larger $\theta$ reduces both the number and size of MIIAs that need to be constructed.
Note that although both MIA-NPP and CMIA-O are based on the MIA model and share the same threshold parameter $\theta$, CMIA-O may select nodes with low (or even no) influence from the misinformation seed as corrective seed nodes, whereas MIA-NPP considers only those nodes that are likely to be influenced by misinformation as candidates for prebunking interventions. 
As a result, under the same $\theta$, the number of MIOAs and MIIAs that must be constructed in MIA-NPP is relatively smaller, and thus, MIA-NPP tends to require less computation time than CMIA-O.

\begin{figure}[t]
    \centering
    \includegraphics[width=1.0\linewidth]{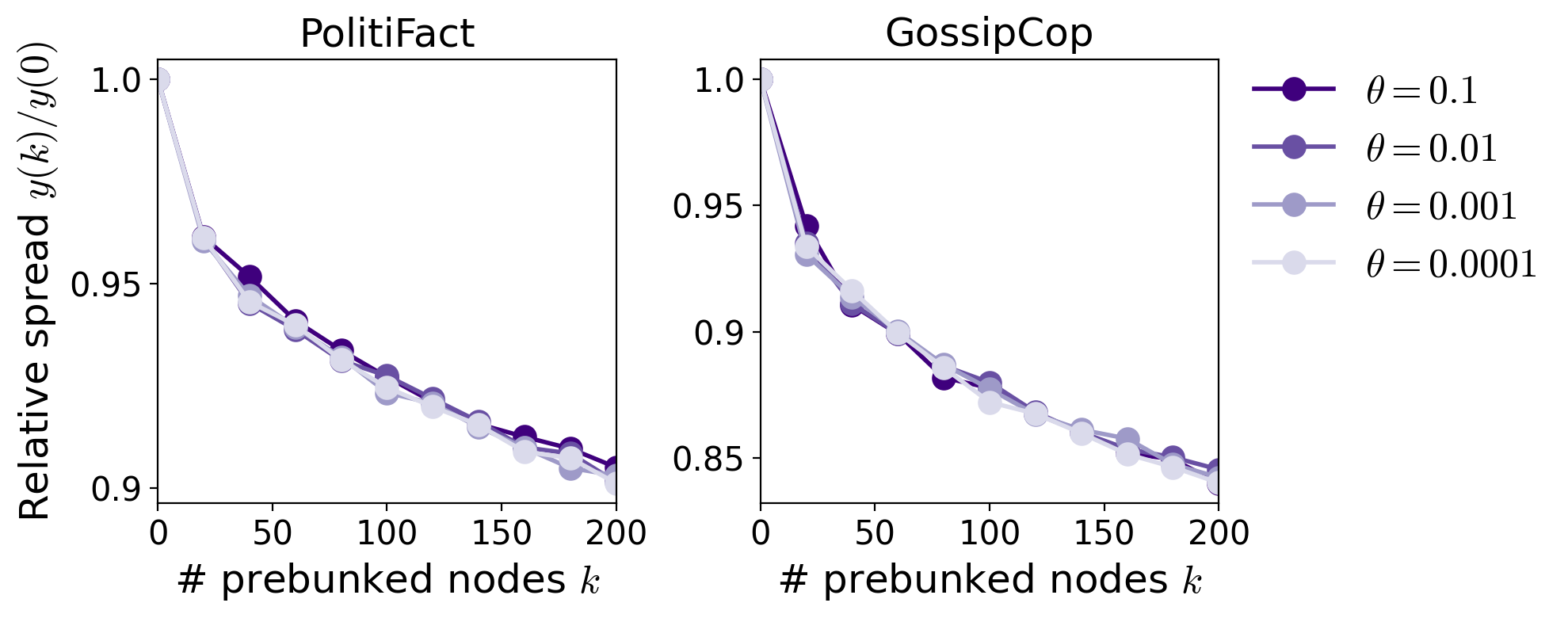}
    \caption{Misinformation suppression effects of MIA-NPP with $\theta \in \{0.1, 0.01, 0.001, 0.0001\}$ on the PolitiFact and GossipCop networks.}
    \label{fig:sensitivity}
\end{figure}

\begin{figure}[t]
    \centering
    \includegraphics[width=1.0\linewidth]{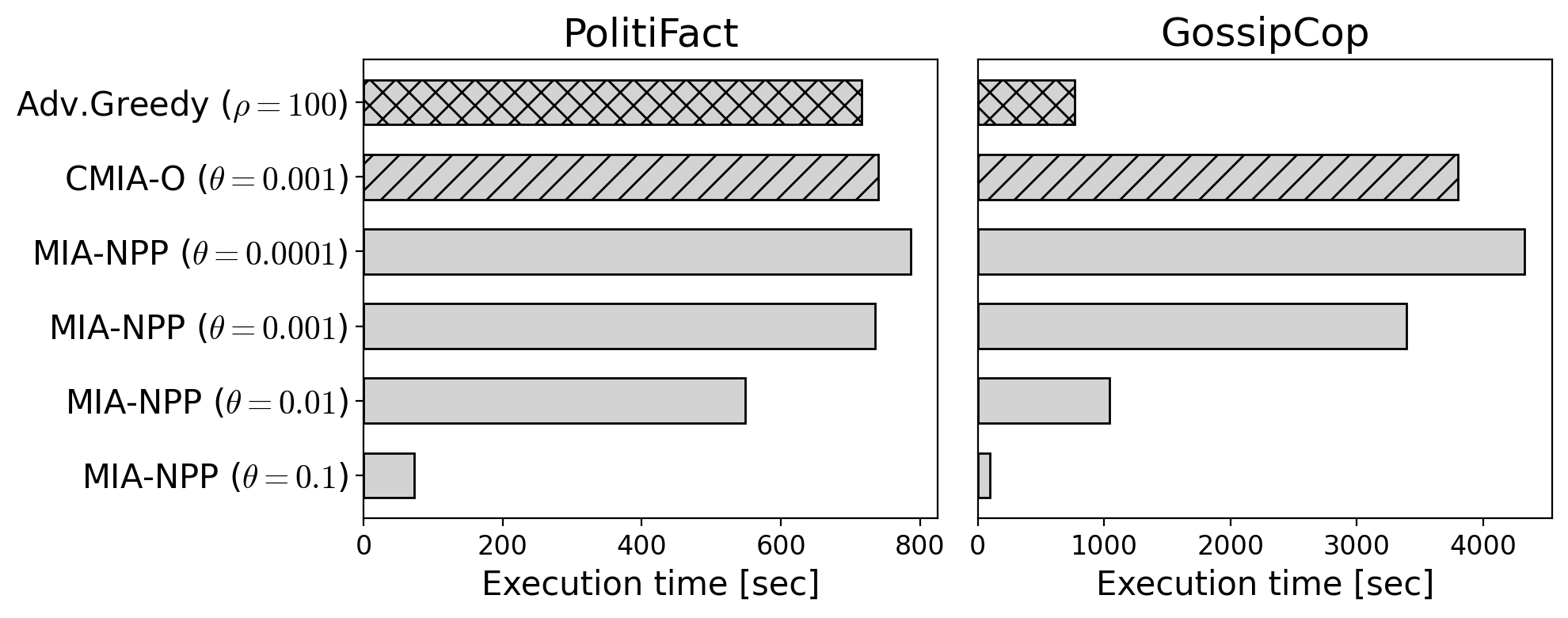}
    \caption{Execution time of MIA-NPP for different $\theta$.}
    \label{fig:execution_time}
\end{figure}

\begin{figure*}[t!]
    \centering
    \includegraphics[width=1.0\linewidth]{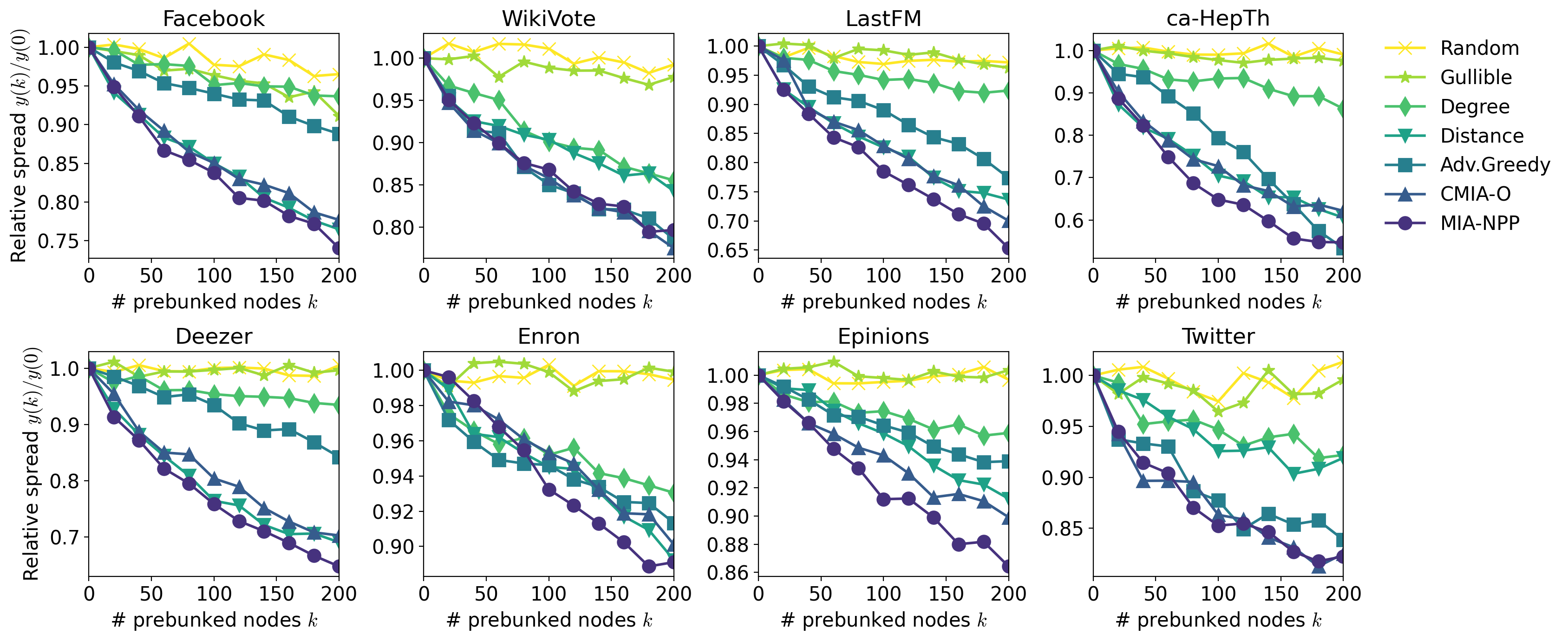}
    \caption{Misinformation suppression effects of each algorithm on various social networks.}
    \label{fig:results_synthetic}
\end{figure*}

\section{Suppression Effects for Other Social Networks}
\label{sect:results_for_other_datasets}

To evaluate whether MIA-NPP remains robust across general social networks beyond the PolitiFact and GossipCop networks, we compared its misinformation suppression effect with that of other algorithms on eight real-world social networks listed in Table~\ref{tab:stat_datasets}.
Since these datasets do not provide misinformation susceptibility parameters $q_v$ for individual nodes, we synthetically assigned $q_v$ values by sampling from a truncated normal distribution $\mathcal{N}_{[0,1]}(0.7, 0.3)$ with mean $\mu_q = 0.7$ and variance $\sigma_q^2 = 0.3$. 
Following the weighted cascade (WC) model~\cite{kempe2003maximizing}, we set the propagation probability $p_{uv}$ for each edge $(u, v)$ as $p_{uv} = 1 / d_v^{\mathrm{in}}$, where $d_v^{\mathrm{in}}$ is the in-degree of node $v$.
The seed nodes were selected by randomly choosing five nodes from among the top 50 nodes with the highest out-degrees in each network. 
This is because, if seed nodes are randomly selected from all nodes, nodes with low out-degree (which constitute the majority in these networks) are more likely to be selected as seed nodes.
Such nodes rarely trigger sufficient diffusion, making it difficult to conduct a proper evaluation.
All other experimental settings follow those described in Section~\ref{sect:eval_perfect}. 

The results are shown in Fig.~\ref{fig:results_synthetic}. 
As the figure illustrates, MIA-NPP consistently outperforms all other algorithms in suppressing the spread of misinformation across all networks.

\begin{table}[h]
\caption{Statistics of datasets.}
\label{tab:stat_datasets}
\centering
 \begin{tabular}{lrrrc}
  \toprule     
  Network
  & \thead{$|V|$} & \thead{$|E|$} & \thead{$d_{\max}^{\mathrm{out}}$} & Type \\
  \midrule       
  Facebook & 4,039 & 88,234 & 1,045 & Undirected \\
  WikiVote & 7,115 & 103,689 & 893 & Directed \\ 
  LastFM & 7,624 & 27,806 & 216 & Undirected \\
  ca-HepTh & 8,638 & 49,633 & 65 & Undirected \\
  Deezer & 28,281 & 92,752 & 172 & Undirected \\
  Enron & 36,692 & 183,831 & 1,383 & Undirected \\
  Epinions & 75,879 & 508,837 & 1,801 & Directed \\
  Twitter & 81,306 & 1,768,149 & 1,205 & Directed \\
  \bottomrule     
 \end{tabular}
\end{table}

\section{Comparison of Suppression Effects: Prebunking vs. Blocking vs. Clarification}
\label{sect:compare_PBC}

In this section, we experimentally compare the effectiveness of prebunking with two representative misinformation mitigation approaches: blocking and clarification.

To evaluate the suppression effect of each approach, we formulate different misinformation minimization problems on the same social network.
For the blocking-based approach, we consider the IMIN problem under the IC model (Eq.~(\ref{eq:IMIN})), and identify a blocking node set $B$ by AdvancedGreedy.
We then compute the relative spread of misinformation $y(k)/y(0)$ when the top-$k$ nodes in $B$ are blocked.
For the clarification-based approach, we consider the IBM problem under the COICM model (Eq.~(\ref{eq:IBM})), and select a seed set $S_T$ for spreading corrective information by CMIA-O.
We evaluate its effectiveness by computing the relative spread $y(k)/y(0)$ when corrective information is launched from the top-$k$ nodes in $S_T$ after a delay of $\tau \in \{0,1\}$.
The delay parameter $\tau$ represents the number of steps after the misinformation is released before the clarification campaign can begin, which implies that no node within $\tau$ hops of any misinformation seed node can be selected as a clarification seed.
In particular, when $\tau = 1$, the seed set $S_T$ must exclude both the misinformation seed nodes $S_M$ and their out-neighbors $\Nout_{S_M}$.
For the prebunking approach, we evaluate the performance in the same way as in Section~\ref{sect:eval_perfect}, but vary the mean of the intervention effect distribution $\mu_\varepsilon \in {0.2, 0.5, 1.0}$ to control the strength of the intervention.
Note that since each approach assumes a different diffusion model, direct comparison of the absolute misinformation spread $y(k)$ is not meaningful.

Figure~\ref{fig:results_compare_PBC} presents the comparison results of misinformation suppression effects across the three approaches.
First, the blocking (B) approach, which can completely cut off the inflow of misinformation to the selected nodes, achieves the most effective suppression compared to the other methods.
However, despite its effectiveness, blocking is highly intrusive, as it entails destructive modifications to the network structure.
The clarification (C) approach achieves suppression performance comparable to blocking when there is no delay in injecting corrective information.
This is because, in the clarification setting, the selected seed nodes always share corrective information and never share misinformation, thereby completely blocking the inflow of misinformation to these nodes.
Effectively, this functions similarly to node blocking.
Thus, it is intuitive that clarification without delay exhibits similar suppression effects to blocking.
However, when a delay is introduced, the suppression effect of clarification significantly deteriorates due to the limited reach of corrective information.
This tendency is consistent with findings in prior studies~\cite{budak2011limiting,wen2014shut}.
The suppression effect of the prebunking (P) approach improves as $\mu_\varepsilon$ increases.
Nevertheless, its effectiveness remains lower than that of blocking or clarification without delay.
This is because, unlike the other approaches, prebunking only reduces the probability of misinformation sharing and does not guarantee that intervened nodes will refrain from sharing misinformation.

\begin{figure}[t]
    \centering
    \includegraphics[width=1.0\linewidth]{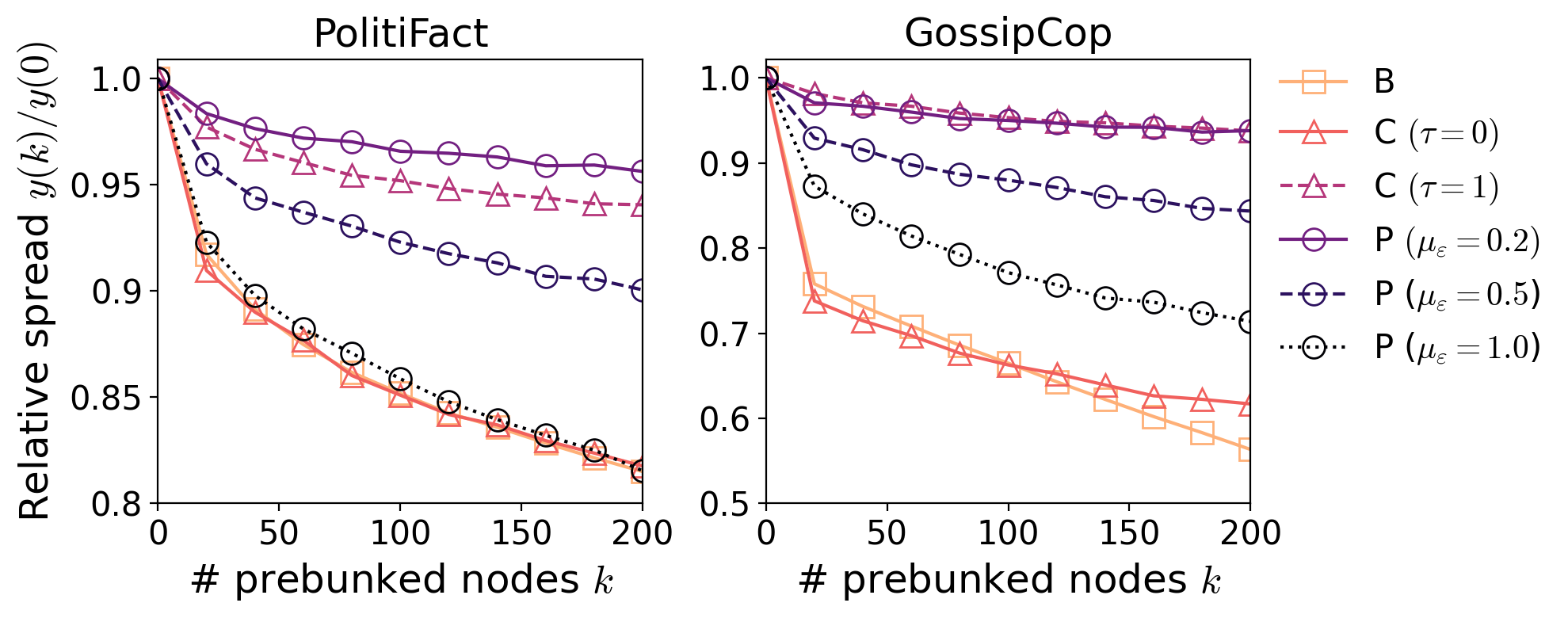}
    \caption{Comparison of misinformation suppression effects of Blocking (B), Clarification (C), and Prebunking (P) approaches on the PolitiFact and GossipCop networks.}
    \label{fig:results_compare_PBC}
\end{figure}

\end{document}